\numberwithin{equation}{section}
\newtheorem{lemma}{Lemma}[section]
\newtheorem{theorem}[lemma]{Theorem}
\newtheorem{proposition}[lemma]{Proposition}
\newtheorem{corollary}[lemma]{Corollary}
\newtheorem{remark}[lemma]{Remark}
\newtheorem{condition}{Condition}[section]
\newcommand{\indep}{\perp \!\!\! \perp}
\title{Graphical lasso for extremes}
\author{
Phyllis Wan\footnote{Erasmus University Rotterdam; Econometric Institute, Burg.\ Oudlaan 50, 3062 PA Rotterdam, the Netherlands;
email: wan@ese.eur.nl}
\and
Chen Zhou\footnote{Erasmus University Rotterdam; Econometric Institute, Burg.\ Oudlaan 50, 3062 PA Rotterdam, the Netherlands;
email: zhou@ese.eur.nl}}
\date{}
\begin{document}
\maketitle

\begin{abstract}
In this paper, we estimate the sparse dependence structure in the tail region of a multivariate random vector, potentially of high dimension. The tail dependence is modeled via a graphical model for extremes embedded in the H\"usler-Reiss distribution.  We propose the \textit{extreme graphical lasso} procedure to estimate the sparsity in the tail dependence, similar to the Gaussian graphical lasso in high dimensional statistics.  We prove its consistency in identifying the graph structure and estimating model parameters. The efficiency and accuracy of the proposed method are illustrated by simulations and real data examples.
\end{abstract}
{\footnotesize \noindent\it Keywords and phrases: graphical lasso; graphical models; multivariate extreme value statistics; high dimensional statistics; H\"usler-Reiss distribution. } \\
{\footnotesize {\it AMS 2010 Classification:} 62G32; 62H12; 62F12.} 


\section{Introduction}

Consider a Gaussian random vector with mean zero and covariance matrix $\Sigma$. In a Gaussian graphical model, the precision matrix $\Theta:=\Sigma^{-1}$ encodes the conditional dependence structure among the variables -- variables $i$ and $j$ are conditionally independent given the rest of the variables if and only if $\Theta_{ij}=0$, see, e.g.~\cite{Lauritzen}.  

Given an estimate of the covariance matrix $\hat\Sigma$, the {\it graphical lasso} method estimates a sparse $\Theta$ using $L_1$-regularization by 
\begin{equation} \label{eq:glasso}
	\underset\Theta{\arg\min} \left\{- \log|\Theta| + tr \left( \hat\Sigma\Theta\right) + \gamma \sum_{i\ne j}|\Theta_{ij}| \right\},
\end{equation}
where $\gamma$ is a tuning parameter for regularization;
see, e.g.~\cite{yuan2007model}, \cite{banerjee2008model} and \cite{fri2008}. The advantage of the graphical lasso method is two folds.  First, it reveals the conditional dependence among the underlying random variables by producing a sparse estimate of $\Theta$.  Second, it provides a reliable estimation of $\Theta$ and $\Sigma$ in the high dimensional case where classical covariance estimation fails. The theoretical properties of the graphical lasso procedure are investigated in \cite{rothman2008sparse} and \cite{ravikumar2011high}.

In this paper, we aim to estimate the sparse dependence structure in the {\it tail region} among high dimensional random variables. With the characterization of tail dependence, one can further conduct statistical risk assessment of extreme (co-)occurrences, such as systemic banking failures (e.g.~\citealp{zhou2010banks}) or compound environmental events (e.g.~\citealp{col1991}). Our approach is built on the framework of \cite{eng2018a}, which introduces graphical models for extremes by defining the conditional dependence in the tail distribution.

A parametric distribution family that can accommodates sparse graphical models for extremes is the H\"usler-Reiss (HR) distribution \citep{hue1989}. The class of HR distributions describes the non-trivial limiting tail distributions of Gaussian triangular arrays. Similar to Gaussian distribution, it is parametrized by bilateral relations. More specifically, a $d$-dimensional HR graphical model can be parametrized by a precision matrix $\Theta\in \mathbb{R}^{d\times d}$, such that the variables $i$ and $j$ are conditionally independent in the extremes given the rest of the variables if and only if $\Theta_{ij}=0$ \citep{eng2018a,hentschel2022statistical}.  

Unlike the Gaussian case, the precision matrix $\Theta$ in the HR model is not of full rank.  Due to this low-rank property, existing statistical inference procedures for estimating $\Theta$ in an HR model often start with conditioning on a chosen dimension being above a high threshold. In turn, one can only estimate $\Theta^{(k)} \in \mathbb{R}^{(d-1)\times(d-1)}$, the submatrix of $\Theta$ where the $k$-th row and $k$-th column are removed \citep{eng2018a}. Estimating an HR graphical model is therefore challenging when a sparse $\Theta$ is desired: a sparse estimate of $\Theta^{(k)}$ does not guarantee sparsity on the omitted $k$-th row and column. \cite{hentschel2022statistical} propose an estimation procedure for $\Theta$ using matrix completion when the sparsity structure of $\Theta$ was known. To date, the only sparse estimation for $\Theta$ without knowing the sparsity structure ex-ante is proposed by \cite{engelke2021learning}. They achieve this goal by aggregating sparse estimates of $\Theta^{(k)}$ for all $k=1,\ldots,d$ using a majority vote to decide whether or not each entry of $\Theta$ should be zero.  In other words, their estimation procedure requires estimating $d$ graphical models, which can be computationally intensive for large $d$. 

In this paper, we propose a direct estimate of $\Theta$ with a built-in option for sparse estimation via $L_1$-regularization.  We term it the {\it extreme graphical lasso} (EGLasso).  The core idea is as follows.  We show that by adding a positive constant $c$ to each entry of $\Theta$, the matrix
$$
	\Theta^* := \Theta + c\mathbf1\mathbf1^T
$$
is the inverse of a covariance matrix $\Sigma^*$ which can be estimated consistently from observations. To impose sparsity on the entries of $\Theta$, we only need to estimate $\Theta^*$ by shrinking the off-diagonal entries to $c$, which can be achieved in the optimization
$$
	\underset{\Theta^*}{\arg\min}  \left\{- \log|\Theta^*| + tr \left( \hat\Sigma^*\Theta^*\right) + \gamma\sum_{i\ne j}|\Theta^*_{ij}-c|\right\},
$$
where $\gamma>0$ is the regularization parameter.
The EGLasso thus achieves both graph structure identification and parameter estimation of $\Theta$ by solving a single penalized likelihood problem.  Compared to the EGLearn procedure of \cite{engelke2021learning}, which requires solving $d$ separate optimization problems, EGLasso is computationally more efficient, particularly for large $d$. On the other hand, EGLearn with neighborhood selection can achieve better graph recovery for very sparse graphs, as it exploits the conditional independence structure node-by-node. The two approaches offer complementary strengths.

We provide finite sample theory and asymptotic theory for the extreme graphical lasso. In particular, we show a consistent identification of the graph and accurate estimation of the non-sparse parameters in $\Theta$. We provide practical guidance in choosing the tuning parameters for applying the extreme graphical lasso method, based on theoretical motivations. We compare the performance of EGLasso with EGLearn through an extensive simulation study using data simulated from the max-stable H\"usler-Reiss distribution. The simulation results show that EGLasso achieves comparable graph recovery for moderately dense graphs, while being $3$--$5$ times faster at dimension $d=100$. Finally, we apply EGLasso to two real data examples and compare the results with EGLearn.

The remainder of the paper is structured as follows.  The background for HR graphical models and new insights regarding its parametrizations are presented in Section~\ref{sec:hr}.  The extreme graphical lasso method is introduced in Section~\ref{sec:eglasso}.  The finite sample and asymptotic theories are shown in Sections~\ref{sec:results} and \ref{sec:estimation}.  Finally, the performance of the method is illustrated in Sections~\ref{sec:simulations} and \ref{sec:application}.

\subsection{Notation}
We adopt the following notation.  Let $\mathbf0$ and $\mathbf1$ denote vectors whose elements are all 0's and all 1's respectively.  We allow them to denote vectors of different length in different contexts when there are no possibilities of confusion.  Let $\mathbf{e}_j$ denote the vector such that
$$
	\mathbf{e}_j = \underbrace{(0,\ldots,0,1,0,\ldots,0)^T}_{\text{Only the $j$-th entry is 1.}}.
$$
Given a square matrix $A$, $A\succ 0$ and $A\succeq 0$ denote the fact that $A$ is positive definite and positive semi-definite, respectively.
For the matrix norms: $\Vert\cdot\Vert_\infty$ is the element-wise $L_\infty$-norm, both for vectors and matrices; $|||\cdot|||_\infty$ is the $l_\infty$-operator norm for matrices, i.e.~the row-wise maxima of $L_1$-norms applied to each row. We note the following properties of these norms:
\begin{itemize}
	\item Both $\Vert\cdot\Vert_\infty$ and $|||\cdot|||_\infty$ are norms.
	\item For matrix $A$ and vector $v$, $\Vert Av\Vert_\infty\leq ||| A|||_\infty \Vert v\Vert_\infty$.
	\item For matrices $A$ and $B$ with compatible dimensions, $||| AB|||_\infty\leq ||| A|||_\infty||| B|||_\infty$.
\end{itemize}


\section{H\"usler-Reiss graphical models} \label{sec:hr}

In this section, we describe the class of HR graphical models.  In particular, we offer new insights into its various parametrizations, including their geometric interpretations and their connections to each other.

\subsection{Graphical models for extremes}
Consider a random vector $\mathbf{X}=(X_1,\ldots,X_d)$. Denote $\tilde X_k = \frac{1}{1-F_k(X_k)}$, where $F_k$ is the marginal distribution function of $X_k$. Then $\mathbf{\tilde X}=(\tilde X_1,\ldots,\tilde X_d)$ is a random vector with standard Pareto marginals and summarizes the dependence structure of $\mathbf{X}$. Following multivariate extreme value theory, we assume that $\mathbf{\tilde X}$ belongs to the domain of attraction of a multivariate extreme value distribution, i.e. the limit of its component-wise maxima converges to a non-degenerate distribution. Specifically, given i.i.d.~copies of  $\mathbf{\tilde X}$, $\mathbf{\tilde X}^i = (\tilde X_1^i,\ldots,\tilde X_d^i),i\in\mathbb{N}$, there exists a random vector $\mathbf{Z} = (Z_1,\ldots,Z_d)$ such that
\begin{equation} \label{eq:maxima}
	P(\mathbf{Z} \le \mathbf{z}) := \lim_{n\to\infty} P\left( \max_{i=1,\ldots,n} \tilde X_1^i \le nz_1,\ldots, \max_{i=1,\ldots,n} \tilde  X_d^i \le nz_d\right) =  G(\mathbf{z}),
\end{equation}
where each marginal distribution of $G$ is Fr\'echet-distributed. 
By writing
$$
	G(\mathbf{z}) = \exp\left(- \Lambda(\mathbf{z})\right),
$$
where $\Lambda$ is a Radon measure on the cone $\mathcal{E} = [0,\infty)^d\backslash\{\mathbf0\}$ and $\Lambda(\mathbf{z})$ is shorthand for $\Lambda([0,\infty)^d\backslash[\mathbf0,\mathbf{z}])$, the measure $\Lambda$ is known as the exponent measure and characterizes the dependence structure of $\mathbf{X}$ in the tail region, see, e.g. \citet[Chapter 6.1]{haan2006extreme}. 

The domain of attraction condition \eqref{eq:maxima} can be equivalently expressed in terms of threshold exceeding. Consider the exceedances of $\mathbf{\tilde X}$ where its $L_\infty$-norm $\|\mathbf{\tilde X}\|_\infty$ is higher than a certain threshold. There exists a random vector $\mathbf{Y}$ such that
\begin{equation} \label{eq:pot}
	P(\mathbf{Y} \le \mathbf{z}) := 	\lim_{u\to \infty} P\left(\left. \frac{\mathbf{\tilde X}}{u}\le \mathbf{z}\right| \|\mathbf{\tilde  X}\|_\infty > u\right) = \frac{\Lambda(\mathbf{z}\wedge\mathbf1) - \Lambda(\mathbf{z})}{\Lambda(\mathbf{1})}.
\end{equation}
Here the random vector $\mathbf{Y}$ is defined with support on the $L$-shaped set $\mathcal{L}=\{\mathbf{x}\in\mathcal{E}:\|\mathbf{x}\|_\infty>1\}$.  
Its distribution is known as a multivariate Pareto distribution; see \cite{rootzen2006multivariate}.


The framework of graphical models for extremes \citep{eng2018a} considers the conditional independence of the threshold exceedance limit $\mathbf{Y}$ in \eqref{eq:pot}.  Since $\mathbf{Y}$ is defined on the $L$-shaped set $\mathcal{L}=\{\mathbf{x}\in\mathcal{E}:\|\mathbf{x}\|_\infty>1\}$, which is not a product space, the notion of conditional independence is instead defined on the subspace $\mathcal{L}^k = \{\mathbf{x}\in\mathcal{L}: x_k>1\}$ for each $k$.  Define the random vector $\mathbf{Y}^k\stackrel{d}{=}\mathbf{Y}|Y_k>1$.  Then $\mathbf{Y}$ is said to exhibit conditional independence in extremes between component $i$ and $j$ if and only if $Y_i^k$ and $Y_j^k$ are conditionally independent for all $k$,
\begin{equation} \label{eq:extreme_ci}
	Y_i \indep_e Y_j | \mathbf{Y}_{\backslash\{i,j\}} \quad \Leftrightarrow \quad  \forall k\in\{1,\ldots,d\}: Y_i^k \indep Y_j^k | \mathbf{Y}^k_{\backslash\{i,j\}},
\end{equation} 
where $\mathbf{Y}_{\backslash\{i,j\}}$ ($\mathbf{Y}^k_{\backslash\{i,j\}}$) indicates all other dimensions in $\mathbf{Y}$ ($\mathbf{Y}^k$) excluding $\{i,j\}$. 

Let $\mathcal{G}=(V,E)$ be a graph defined by a set of nodes $V=\{1,\ldots,d\}$ and a set of undirected edges between pairs of nodes $E \subset V\times V$.  A graphical model for extremes based on graph $\mathcal{G}$ has a multivariate Pareto distribution $\mathbf{Y}$ that satisfies
$$
	 \{i,j\}\notin E \quad \Leftrightarrow \quad Y_i \indep_e Y_j | \mathbf{Y}_{\backslash\{i,j\}}.
$$
In the case where the exponent measure $\Lambda$ admits a density function $\lambda$, the conditional independence in extreme is equivalent to the decomposition of $\lambda$ \citep{eng2018a}:
\begin{equation} \label{eq:lambda:decomp}
	Y_i \indep_e Y_j | \mathbf{Y}_{\backslash\{i,j\}} \quad \Leftrightarrow \quad \lambda(\mathbf{y}) = \frac{\lambda_{\backslash\{i\}}(\mathbf{y}_{\backslash\{i\}})\lambda_{\backslash\{j\}}(\mathbf{y}_{\backslash\{,j\}})}{\lambda_{\backslash\{i,j\}}(\mathbf{y}_{\backslash\{i,j\}})},
\end{equation}
where $\mathbf{y}_{\backslash{A}}$ denotes the entries of $\mathbf{y}$ outside of the index set $A$.



\subsection{H\"usler-Reiss graphical models}

The class of H\"usler-Reiss (HR) distributions is the family of distributions that describes the non-trivial tail limiting distribution of Gaussian triangular arrays \citep{hue1989}.  It has been used as a canonical parametric family and a counterpart to Gaussian distribution in modelling extremes.  There exist various parametrizations of the HR distributions,  the algebraic relationships between which can be found in \cite{rottger2023}.  In the following, we motivate these parametrizations from a geometric point of view, which leads to simpler proofs of existing results and additional insights.  The proofs of the propositions in this section are presented in Appendix~\ref{appendix:prop:proofs}.

\subsubsection{Charaterization using variagoram matrix $\Gamma$ and sub-covariance matrices $\Sigma^{(k)}$'s}

In its original definition \citep{hue1989}, a $d$-dimensional HR model is parametrized by a matrix $\Gamma \in \mathbb{R}^{d\times d}$, defined on the parameter space
$$
	\mathcal{D} =  \left\{\Gamma| \Gamma^T = \Gamma,\ \text{diag}(\Gamma)=\mathbf0;\ \mathbf{x}^T\Gamma \mathbf{x} \le 0,\ \text{for any $\mathbf{x}\ne \mathbf0$ such that $\mathbf{x}^T\mathbf1=\mathbf0$})\right\}.
$$
In this paper, we only consider the full-rank HR model defined by the sub-parameter space
$$
	\mathcal{D}_0 =  \left\{\Gamma| \Gamma^T = \Gamma,\ \text{diag}(\Gamma)=\mathbf0;\ \mathbf{x}^T\Gamma \mathbf{x} < 0,\ \text{for any $\mathbf{x}\ne \mathbf0$ such that $\mathbf{x}^T\mathbf1=\mathbf0$})\right\},
$$
such that the exponent measure $\Lambda$ admits a density\footnote{The same assumption is imposed in all previous literature on extremal graphical models, such as \cite{eng2018a}, \cite{engelke2021learning}, \cite{hentschel2022statistical} and \cite{rottger2023}.}.

The following proposition provides the geometric interpretation of this parameter space. Although it is a well-known result, a proof is included for completeness.

\begin{proposition}\label{prop:variogram}
Given a multivariate random vector $\mathbf{W}=(W_1,\ldots,W_d)$, define its variogram matrix to be $\Gamma = (\Gamma_{ij}) \in \mathbb{R}^{d\times d}$ such that
$$
	\Gamma_{ij} = E(W_i-W_j)^2.
$$
The parameter space $\mathcal{D}_0$ is the collection of variogram matrices $\Gamma$ for all random vectors $\mathbf{W}$ with positive definite covariance matrices.

\end{proposition}

Given the covariance matrix $\Sigma$ of $\mathbf{W}$, its variogram matrix can be derived from
\begin{eqnarray}
	 \Gamma_{ij} &=& E[(W_i - W_j)]^2 \nonumber\\
	 &=& E(W_i - W_k)-(W_j - W_k)^2 \nonumber\\
	 &=& E(W_i - W_k)^2 + E(W_j - W_k)^2 - 2E[(W_i - W_k)(W_j-W_k)] \nonumber\\
	 &=& \Sigma_{ii} + \Sigma_{jj}  - 2\Sigma_{ij}. \label{eq:sigma:gamma}
\end{eqnarray}
However, given a variogram $\Gamma$, the choice of covariance matrix of 
the corresponding random vector $\mathbf{W}$ needs not be unique.  For example, given any univariate random variable $A$, $\mathbf{W} - A\cdot \mathbf1$ has the same variogram as $\mathbf{W}$.  Therefore the converse of Proposition~\ref{prop:variogram} fails: any $\Gamma \in \mathcal{D}_0$ can also be the variogram matrix of a $\mathbf{W}$ whose covariance matrix $\Sigma$ is not of full rank.

It is possible to impose additional assumptions on the structure of $\mathbf{W}$ to obtain a one-to-one correspondence between its covariance and variogram matrices.   One option is to restrict $W_k\equiv0$ for a given $k$.  
Assume that $\mathbf{W}$ has full-rank covariance matrix and variogram $\Gamma$.  Consider $\mathbf{W}^{(k)} := \mathbf{W} - W_k \cdot \mathbf1$.  Then $\mathbf{W}^{(k)}$ has variogram $\Gamma$ and covariance matrix $\tilde\Sigma^{(k)}$ with its elements defined by
\begin{eqnarray}
	 \tilde\Sigma^{(k)}_{ij} &:=& E[W^{(k)}_iW^{(k)}_j] \nonumber\\
	 &=& E[(W_i - W_k)(W_j - W_k)] \nonumber\\
	 &=& \frac{1}{2}\left(E(W_i - W_k)^2 + E(W_j - W_k)^2 - E[(W_i - W_j)^2\right) \nonumber\\
	 &=&\frac{1}{2}\left(\Gamma_{ik} + \Gamma_{jk} - \Gamma_{ij}\right) \label{eq:sigma_k}.
\end{eqnarray}
Combining \eqref{eq:sigma:gamma} and \eqref{eq:sigma_k}, there exists a one-to-one transformation between $\Gamma$ and $\tilde\Sigma^{(k)}$.  Consequently, we can use $\tilde\Sigma^{(k)}$, for any fixed $k$, to parametrize an HR distribution.  Note that $\tilde\Sigma^{(k)}$ is degenerate on the $k$-th row and the $k$-th column.  Let $\Sigma^{(k)}$ be the $(d-1)\times(d-1)$ matrix constructed by removing the $k$-th row and the $k$-th column from $\tilde\Sigma^{(k)}$.  For convenience, we index the rows and columns of $\Sigma^{(k)}$ using the original row and column numbers from $\tilde\Sigma^{(k)}$.  By definition, $\tilde\Sigma^{(k)}$ is of rank $d-1$, therefore $\Sigma^{(k)}$ is of full rank. 

This parametrization can be used to express the density of the exponent measure $\Lambda(\cdot)$ of the HR model \citep{engelke2015estimation}:
$$
	\lambda(\mathbf{y}) = y_k^{-2} \prod_{i\ne k} y_{i}^{-1} \phi_{d}(\mathbf{y}'_{k};\Sigma^{(k)}),
$$
where $\phi_{d}(\cdot;\Sigma^{(k)})$ is the density of a centered $d$-dimensional Gaussian distribution with covariance matrix $\Sigma^{(k)}$ and $\mathbf{y}'_k=\{\log(y_i/y_k) + \Gamma_{ik}/2\}_{i=\{1,\ldots,d\}\backslash k}$.  Note again that this formula holds for any $k=1,\ldots,d$.

\subsubsection{Characterization using precision matrix $\Theta$}

Denote the inverse of $\Sigma^{(k)}$ to be $\Theta^{(k)}:= \left(\Sigma^{(k)}\right)^{-1}$.  It follows from the decomposition of the exponent density \eqref{eq:lambda:decomp} that
$$
	Y_i \indep_e Y_j|\mathbf{Y}_{\backslash\{i,j\}} \quad \Leftrightarrow \quad \Theta^{(k)}_{ij} = 0, \quad \forall k \ne i,j.
$$
In other words, $\Theta^{(k)}$ serves as the precision matrix for the graph $\mathcal{G}$ excluding the node $k$.
The following result from \cite{eng2018a} guarantees a precision matrix $\Theta$ for the full graph $\mathcal{G}$.

\begin{proposition}[Lemma 1, \citealp{eng2018a}]
For $k\ne k'$, the following relationships hold
\begin{eqnarray}
	\Theta^{(k')}_{ij} = \Theta^{(k)}_{ij} &,& \text{if } i,j \ne k; \nonumber\\
	\Theta^{(k')}_{ik} = -\sum_{l\ne k}\Theta^{(k)}_{il} &,& \text{if } i \ne k \text{ and } j=k; \nonumber\\
	\Theta^{(k')}_{kk} = \sum_{m,l\ne k}\Theta^{(k)}_{ml} &,& \text{if } i = k \text{ and } j=k. \nonumber
\end{eqnarray}
\end{proposition}

Define matrix $\Theta \in \mathbb{R}^{d\times d}$ such that for any $k$,
\begin{eqnarray}
	\Theta_{ij} = \Theta^{(k)}_{ij} &,& \text{if } i,j \ne k; \nonumber\\
	\Theta_{ik} = -\sum_{l\ne k}\Theta^{(k)}_{il} &,& \text{if } i \ne k \text{ and } j=k; \nonumber\\
	\Theta_{kk} = \sum_{m,l\ne k}\Theta^{(k)}_{ml} &,& \text{if } i = k \text{ and } j=k. \nonumber
\end{eqnarray}
Then $\Theta$ satisfies
$$
	Y_i \indep_e Y_j|\mathbf{Y}_{\backslash\{i,j\}} \quad  \Leftrightarrow \quad \Theta^{(k)}_{ij} = 0,\, i,j \ne k \quad\Leftrightarrow \quad \Theta_{ij} = 0.
$$
Hence $\Theta$ is referred to as the {\it precision matrix} for the HR graphical model \citep{hentschel2022statistical}. 

Let $\tilde\Theta^{(k)}$ be the $(d\times d)$-matrix obtained by augmenting $\Theta^{(k)}$ with a $k$-th column and a $k$-th row of 0 entries.
Then for any $k$, $\Theta \in \mathbb{R}^{d\times d}$ can be written as
$$
	\Theta =  (I - \mathbf1 \mathbf{e}_k^T) \cdot\tilde\Theta^{(k)} \cdot (I -  \mathbf{e}_k\mathbf1^T),
$$
where recall that $\mathbf{e}_k$ is the vector with 1 on the $k$-th entry and 0 everywhere else. 
In particular, removing the $k$-th column and the $k$-th row from $\Theta$ results in $\Theta^{(k)}$.  
There exists one-to-one relationship between $\Theta$ and $\Theta^{(k)}$, and hence $\Sigma^{(k)}$, for each $k$.  Therefore $\Theta$ can be use to parametrize an HR distribution.  

\subsubsection{Characterization using covariance matrix $\Sigma$}

In the following, we define a covariance matrix which serves as the `inverse' fo $\Theta$.  Given a variogram $\Gamma$, we consider the corresponding random vector $\mathbf{W}$ with the restriction $\sum_{k=1}^d W_k \equiv 0$.
The following proposition shows that, given any variogram $\Gamma \in \mathcal{D}_0$, one can always construct a random vector whose components sum to zero and whose variogram equals $\Gamma$, and that such a vector admits a well-defined covariance matrix.

\begin{proposition} \label{prop:sigma:gamma}
Let $\mathbf{W}$ be a random vector with variogram $\Gamma \in \mathcal{D}_0$.  Then the random vector $\mathbf{W}' := \mathbf{W} - \bar{W} \cdot \mathbf1$, which satisfies $\sum_{k=1}^d W'_k = 0$, has variogram $\Gamma$ and covariance matrix
\begin{equation} \label{eq:definition of Sigma}
	\Sigma := -\frac{1}{2}\left(I - \frac{\mathbf{1}\mathbf{1}^T}{d}\right) \Gamma \left(I - \frac{\mathbf{1}\mathbf{1}^T}{d}\right).
\end{equation}
\end{proposition}

Conversely, $\Gamma$ can be obtained from $\Sigma$ via \eqref{eq:sigma:gamma}.  Therefore $\Sigma$ can be used to parametrize an HR distribution.

In the following, we provide the connections between $\Sigma$, $\tilde\Sigma^{(k)}$ and $\Theta$.

\begin{proposition}[Relationship between $\Sigma$ and $\tilde\Sigma^{(k)}$] \label{prop:sigma:sigma_k}
For any $k$, $\tilde\Sigma^{(k)}$ can be written as
$$
	\tilde\Sigma^{(k)} =  (I - \mathbf1 \mathbf{e}_k^T) \cdot \Sigma \cdot (I -  \mathbf{e}_k\mathbf1^T).
$$
Conversely, given $\tilde\Sigma^{(k)}$'s for all $k$, $\Sigma$ can be written as
\begin{equation} \label{eq:sigma:sigma_k:1}
	\Sigma = \frac{1}{d} \sum_{k=1}^d \tilde\Sigma^{(k)} + \frac{1}{d} tr(\Sigma) \cdot  \mathbf1\mathbf1^T.
\end{equation}
\end{proposition}

\begin{remark}
Given the fact that $\Sigma\mathbf1 =0$, from \eqref{eq:sigma:sigma_k:1}, 
$$
	0 = \mathbf1^T\Sigma\mathbf1 = \frac{1}{d} \sum_{k=1}^d \mathbf1^T\tilde\Sigma^{(k)}\mathbf1 + \frac{1}{d} tr(\Sigma) \cdot  \mathbf1^T\mathbf1\mathbf1^T\mathbf1 =  \frac{1}{d} \sum_{k=1}^d \mathbf1^T\tilde\Sigma^{(k)}\mathbf1 + \frac{1}{d} tr(\Sigma) \cdot  d^2.
$$
Therefore $\Sigma$ can be derived from $\tilde\Sigma^{(k)}$'s by
\begin{equation} \label{eq:sigma:sigma_k}
	\Sigma = \frac{1}{d} \sum_{k=1}^d \tilde\Sigma^{(k)} - \frac{1}{d^3} \left( \sum_{k=1}^d \mathbf1^T\tilde\Sigma^{(k)}\mathbf1\right) \cdot  \mathbf1\mathbf1^T.
\end{equation}
\end{remark}

\begin{proposition}[Relationship between $\Sigma$ and $\Theta$] \label{prop:sigma:theta}
Let $\Sigma$ have the eigendecomposition
$$
	\Sigma = \sum_{k=1}^d \lambda_k \mathbf{u}_k\mathbf{u}_k^T,
$$
where $\{\lambda_k\}$ is the set of ordered eigenvalues of $\Sigma$ and $\{\mathbf{u}_k\}$ is the corresponding set of orthonormal eigenvectors.
Then 
$$
	0 = \lambda_1 <\lambda_2\le \cdots \le \lambda_d,
$$
and 
$$
	\mathbf{u}_1 = \frac{1}{\sqrt{d}}\mathbf1.
$$
Meanwhile, $\Theta$ can be written as
$$
	\Theta = \sum_{k=2}^d \frac{1}{\lambda_k} \mathbf{u}_k\mathbf{u}_k^T.
$$
In particular, for any $M>0$,
$$
	 \left( \Sigma + \frac{M}{d}\mathbf{1}\mathbf{1}^T\right)^{-1} =\left( \Sigma + M\mathbf{u}_1\mathbf{u}_1^T\right)^{-1} =\Theta + \frac{1}{M}\mathbf{u}_1\mathbf{u}_1^T =\Theta + \frac{1}{dM}\mathbf{1}\mathbf{1}^T.
$$
\end{proposition}

\begin{remark}
\cite{hentschel2022statistical} study the matrix $\Sigma$ and show that it is the Moore-Penrose inverse of $\Theta$ such that
$$
	\lim_{M\to\infty} \left( \Sigma + M\mathbf{1}\mathbf{1}^T\right)^{-1} = \Theta.
$$
Proposition~\ref{prop:sigma:theta} provides a stronger result with a simpler proof.
\end{remark}

\begin{corollary}[Parameter space of $\Sigma$ and $\Theta$] \label{prop:parameter:space}
Let
$$
	\mathcal{L} = \{A \in \mathbb{R}^{d\times d}: A \succeq 0, A \mathbf1=\mathbf0, \text{rank}(A) = d-1\}.
$$
Then any $\Sigma \in \mathcal{L}$ or $\Theta \in \mathcal{L}$ characterizes an HR distribution.
\end{corollary}

\section{The extreme graphical lasso} \label{sec:eglasso}

In this section, we propose a sparse estimator for the precision matrix $\Theta$.  As we have seen in Section~\ref{sec:hr}, the HR model can be equivalently parametrized by $\Gamma$, $\Sigma^{(k)}$'s, $\Sigma$, $\Theta^{(k)}$'s or $\Theta$.  Among these parameter matrices, $\Sigma^{(k)}$'s can be most conveniently estimated from data.  Given an HR multivariate Pareto distribution $\mathbf{Y}$, \cite{engelke2015estimation} show that
\begin{equation} \label{eq:cond:guassian}
	\left(\log\mathbf{Y}_{-k} - \log(Y_k)\cdot \mathbf1\right) |_{Y_k>1} \sim N(\Gamma_{-k, k}/2, \Sigma^{(k)}).
\end{equation}
Therefore $\Sigma^{(k)}$'s can be estimated from the sample covariance matrix of the thresholded transformed observations.  We will describe the details of the estimator in Section~\ref{sec:hat_sigma_k}.  For now, we consider using estimators $\hat\Sigma^{(k)}$'s to derive an estimator for $\Theta$.  Since each $\hat\Sigma^{(k)}$ requires different truncation and therefore makes use of different information from the data, it is preferred to utilize the estimators $\hat\Sigma^{(k)}$ for all $k=1,\ldots,d$.  

A straightforward non-sparse estimator for $\Theta^{(k)}$ is $\left(\hat\Sigma^{(k)}\right)^{-1}$, which can be written as
\begin{equation} \label{eq:hat_theta_k}
	\hat\Theta^{(k)} := \left(\hat\Sigma^{(k)}\right)^{-1} = \underset{\Theta^{(k)}\succ0}{\arg\min} \left\{- \log|\Theta^{(k)}| +tr\left(\hat{\Sigma}^{(k)}\Theta^{(k)}\right) \right\}.
\end{equation}
The right hand side of \eqref{eq:hat_theta_k} corresponds to a pseudo negative log-likelihood in \eqref{eq:cond:guassian}, which allows us to interpret $\hat\Theta^{(k)}$ as a pseudo-MLE estimator (`pseudo' as to acknowledge the fact that the mean of the Gaussian distribution is estimated separately using the sample mean and not jointly with the covariance matrix).  It can also be viewed as the minimization of a Bregman divergence between the observed and estimated inverse covariance matrix (see, e.g.~\citealp{ravikumar2011high}) and therefore can be interpreted outside the Gaussian context.

Given estimators $\hat\Sigma^{(k)}$ for all $k$, we may aggregate \eqref{eq:hat_theta_k} for all $k$ and solve for $\Theta$ to minimize
\begin{equation} \label{eq:opt2}
	\underset{\Theta\in \mathcal{L}}{\arg\min} \sum_{k=1}^d \left\{- \log|\Theta^{(k)}| +tr\left(\hat{\Sigma}^{(k)}\Theta^{(k)}\right) \right\},
\end{equation}
where each $\Theta^{(k)}$ denote the submatrix of $\Theta$ with the $k$-th row and column removed.

Let $S$ be the estimator of $\Sigma$ from \eqref{eq:sigma:sigma_k} defined as follows:
\begin{equation} \label{eq:sigma:hat}
	S:= \frac{1}{d} \sum_{k=1}^d \hat{\tilde\Sigma}^{(k)} -  \left(\frac{1}{d^3} \sum_{k=1}^d \mathbf1^T\hat{\tilde\Sigma}^{(k)} \mathbf1\right)  \mathbf1\mathbf1^T,
\end{equation}
where $\hat{\tilde\Sigma}^{(k)}$ denote the estimator for $\tilde\Sigma^{(k)}$ augmented from $\hat\Sigma^{(k)}$.  
The following proposition shows that the aggregated negative log-likelihood \eqref{eq:opt2} can be written as a negative log-likelihood-like expression of $\Theta$ using $S$.

\begin{proposition} \label{prop:mle}
The following problems have equivalent solutions:
\begin{equation} \label{eq:theta:mle}
	\underset{\Theta\in\mathcal{L}}{\arg\min} \sum_{k=1}^d \left\{- \log|\Theta^{(k)}| +tr\left(\hat{\Sigma}^{(k)}\Theta^{(k)}\right) \right\} = \underset{\Theta\in\mathcal{L}}{\arg\min} \left\{-\log|\Theta|_+ + tr(S \Theta)\right\}.
\end{equation}
Here $|\cdot|_+$ denote the pseudo-determinant (product of all nonzero eigenvalues) of a matrix.
\end{proposition}

\begin{remark}
Lemma~5.1 of \cite{rottger2023} motivates the same equivalence as in \eqref{eq:theta:mle}, where $S$ is calculated from a different procedure.   They propose to first obtain estimate $\hat{\Gamma}_k$ from $\hat\Sigma^{(k)}$ via \eqref{eq:sigma:gamma}, average all estimates by $\hat\Gamma = \frac{1}{k}\sum_{k=1}^d \hat{\Gamma}_k$, then obtain the estimate $S$ from the aggregated estimator $\hat\Gamma$ via \eqref{eq:definition of Sigma}.   The two formulation results in the same estimator $S$.

\end{remark}

\subsection{Graphical lasso estimator of $\Theta$}

Following Proposition~\ref{prop:sigma:theta}, write $\Theta$ in its eigen decomposition form
$$
	\Theta = \sum_{k=2}^d \frac{1}{\lambda_k} \mathbf{u}_k\mathbf{u}_k^T.
$$
Then
$$
	|\Theta|_+ = \frac{1}{\prod_{k=2}^d \lambda_k}.
$$
For any $M>0$,
$$
	\left|\Theta + \frac{1}{dM} \mathbf1\mathbf1^T\right| = \left|\Theta + \frac{1}{M}  \mathbf{u}_1\mathbf{u}_1^T\right| = \frac{1}{M} \cdot \frac{1}{\prod_{k=2}^d \lambda_k} = \frac{1}{M}  \cdot |\Theta|_+.
$$
Combining with the fact that $\Theta \cdot \mathbf1=\mathbf0$ for any $\Theta \in \mathcal{L}$, the estimator for $\Theta$ in \eqref{eq:theta:mle} can also be written as
$$
	\underset{\Theta\in\mathcal{L}}{\arg\min} \left\{-\log\left|\Theta + \frac{1}{dM} \mathbf1\mathbf1^T\right| + tr\left(\left(S + \frac{M}{d}\mathbf1\mathbf1^T\right) \left(\Theta+\frac{1}{dM} \mathbf1\mathbf1^T\right)\right)\right\},
$$
which is similar to minimizing the Gaussian negative log-likehood function of $\Theta^* := \Theta + \frac{1}{dM} \mathbf1\mathbf1^T$ using covariance estimate $S^* := S + \frac{M}{d}\mathbf1\mathbf1^T$.
To impose zero-entries to $\Theta$, we impose $L_1$-penalties on the absolute entries of $\Theta$:
$$
	\underset{\Theta\in\mathcal{L}}{\arg\min} \left\{-\log\left|\Theta + \frac{1}{dM} \mathbf1\mathbf1^T\right| +tr\left(\left(S + \frac{M}{d}\mathbf1\mathbf1^T\right) \left(\Theta+\frac{1}{dM} \mathbf1\mathbf1^T\right)\right) + \gamma \sum_{i\ne j}|\Theta_{ij}|\right\}.
$$
Unlike from the classical graphical lasso, as $\gamma$ increases, this will not result in a sparse graph.  Intuitively, due to the constraint that $\sum_{i=1}^d\Theta_{ij} = 0$ for any $j$, $\Theta_{ij}$ will approach $-\Theta_{jj}/(d-1)$ as $\gamma$ increases, not $0$.  See, for example, Theorem~3.1 of \cite{ying2020}. 

To obtain a sparse estimator, we propose to relax the search domain of $\Theta$ and instead solve the following problem:
\begin{equation} \label{eq:eglasso:optimization}
	\hat{\Theta}_{lasso}:=\underset{\Theta + \frac{1}{dM} \mathbf1\mathbf1 \succ 0} {\arg\min}\left\{-\log\left|\Theta + \frac{1}{dM} \mathbf1\mathbf1^T\right| + tr\left(\left(S + \frac{M}{d}\mathbf1\mathbf1^T\right) \left(\Theta+\frac{1}{dM} \mathbf1\mathbf1^T\right)\right) + \gamma \sum_{i\ne j}|\Theta_{ij}|\right\}.
\end{equation}
We term the solution of \eqref{eq:eglasso:optimization} the {\it extreme graphical lasso estimator} of $\Theta$.

In Section~\ref{sec:results}, we show that under suitable conditions, the extreme graphical lasso is consistent in estimating $\Theta$ and recovering underlying sparse graph structure, despite the relaxation of the search domain.

\subsection{Solving the extreme graphical lasso}

Denote $c:= \frac{1}{dM}$, $\Theta^* := \Theta + \frac{1}{dM} \mathbf1\mathbf1^T$ and $S^* := S + \frac{M}{d}\mathbf1\mathbf1^T$.  Then the problem in \eqref{eq:eglasso:optimization} can be reformulated as
$$
	\underset{\Theta^* \succ 0}{\arg\min} \left\{-\log\left|\Theta^*\right| + tr\left(S^*\Theta^*\right) + \gamma \sum_{i\ne j}|\Theta^*_{ij}-c|\right\}.
$$
This is similar to the Gaussian graphical lasso problem in \eqref{eq:glasso} except that we penalize the off-diagonal entries of $\Theta^*$ to a positive constant $c$ instead of $0$.  Despite the similarity, this problem cannot be solved directly by the state-of-the-art GLASSO algorithm proposed in \cite{fri2008}.  Instead, we modify another block coordinate descent algorithm, the P-GLASSO algorithm, from \cite{mazumder2012graphical}.  Similar to GLASSO, P-GLASSO is computationally efficient and is guaranteed to converge on a trajectory of positive definite matrices.  In Appendix~\ref{app:algo}, we provide the algorithm for the extreme graphical lasso and its convergence properties.

\section{Theoretical results} \label{sec:results}

In this section, we establish the non-asymptotic and asymptotic theories for the extreme graphical lasso. The goal is to learn the graphical structure for extremes and estimate the non-zero entries of $\Theta$ simultaneously. 




Recall that $(V,E)$ denotes the set of nodes and edges of the graph. Denote the maximum degree of all nodes as $D=\max_{1\leq i \leq d}\sum_{j=1}^d1_{(i,j)\in E}$. If the dimension $d=d_n\to\infty$ as the sample size $n\to\infty$, we can potentially have $D\to\infty$ and $|E|\to\infty$. Nevertheless, $D=O(d)$ and $|E|=O(d^2)$ always hold.

We first list the conditions required for learning the graph structure. The first condition concerns the structure of the graph reflected in the true covariance matrix $\Sigma$.

\begin{condition}[Mutual incoherence] \label{con:mi}  
	Given $M>0$, define $\Omega=\Sigma^*\otimes \Sigma^*$ where $\Sigma^*=\Sigma + \frac{M}{d}\mathbf{1}\mathbf{1}^T$ and $\otimes$ is the Kroneker product. We assume that there exists $0<\alpha<1$ such that
	$$	|||\Omega_{E^cE}(\Omega_{EE})^{-1}|||_{\infty}<1-\alpha,$$
	where $\Omega_{EE}\in\mathbb{R}^{|E|\times|E|}$ is the submatrix $\left(\Omega_{(i,j),(k,l)}\right)_{(i,j)\in E, (k,l) \in E}$ and $\Omega_{E^cE}$ is defined similarly.
\end{condition}

Condition~\ref{con:mi}, also referred to as the irrepresentatbility condition, is comparable with Assumption 1 in \cite{ravikumar2011high}. Such a condition is needed for theory regarding lasso-type penalization algorithms. 

The second condition concerns the tuning parameter $\gamma=\gamma_n$. In order to identify the exact graph structure, the tuning parameter should be neither too high nor too low. A low $\gamma_n$ will result in non-edges not being penalized to zero while a high $\gamma_n$ will penalize true edges to zero. A suitable $\gamma_n$ is related to the estimation error when estimating $\Sigma$. In Section~\ref{sec:estimation}, we will discuss the (non-)asymptotic properties of the estimator $S$ defined in \eqref{eq:sigma:hat} by handling the estimation error $\|S -\Sigma\|$. In this section, we formulate the bounds for $\gamma_n$ using the estimation error $\delta_n$ based on the event $\{\|S -\Sigma\|_\infty\leq \delta_n\}$.
\begin{condition} \label{con:gamma}  
    Assume that there exists $0<\epsilon<1$, such that
    $$ \|S -\Sigma\|_\infty \leq \delta_n \leq \frac{(1-\epsilon)\epsilon\alpha^2}{D|||\Sigma^*|||_\infty |||\Omega_{EE} ^{-1}|||_\infty\left[(1-\epsilon)\alpha+|||\Sigma^*|||_\infty^2|||\Omega_{EE} ^{-1}|||_\infty\right]}.$$
Further, assume that the tuning parameter $\gamma_n$ satisfies $\underline{C}_\gamma(\delta_n) \le \gamma_n \le \overline{C}_\gamma$, where
	\begin{eqnarray}
		\overline{C}_\gamma&:=&\frac{(1-\epsilon)\alpha(1-\alpha)}{D|||\Sigma^*|||_\infty |||(\Omega_{EE}) ^{-1}|||_\infty\left[(1-\epsilon)\alpha+|||\Sigma^*|||_\infty^2|||(\Omega_{EE})^{-1}|||_\infty\right]},\label{eq:upperbound rho}\\
		\underline{C}_\gamma(\delta_n)&:=&\frac{1-\alpha}{\epsilon\alpha} \cdot \delta_n. \label{eq:lowerbound rho}
	\end{eqnarray}
\end{condition}
Note that the upper bound of $\delta_n$ ensures $\underline{C}_\gamma \le \overline{C}_\gamma$ and thus the existence of a suitable $\gamma_n$. In Section~\ref{sec:estimation}, we show that our estimator $S$ satisfies $\delta_n\stackrel{p}{\to} 0 $ as $n\to\infty$. Therefore, this upper bound assumption is satisfied for sufficiently large $n$ with high probability.



The following theorem provides the concentration bounds for $\hat\Theta_{lasso}$ for fixed $n$. The proof is given in Appendix \ref{appendix: proof for Section 4}.
\begin{theorem} \label{thm:main}
    Assume that Conditions \ref{con:mi} and \ref{con:gamma} hold. 
    Denote 
    \begin{equation}\label{eq:definition r}
    r_n:=\frac{|||\left(\Omega_{EE} \right)^{-1}|||_\infty}{1-\alpha} \cdot \gamma_n.
    \end{equation}
    Then on the event $ \{\|S -\Sigma\|_\infty\leq \delta_n\}$, 
    the extreme graphical lasso algorithm specified in \eqref{eq:eglasso:optimization} has a unique solution $\hat\Theta_{lasso}$. Denote the estimated edges as $\hat E:=\{(i,j):\hat\Theta_{lasso,ij} \neq 0\}$. Then
	$$
		\hat E\subseteq E
	$$
	and $$\Vert\hat\Theta_{lasso}-\Theta\Vert_\infty\leq  r_n.$$
    In particular, if $\min\{|\Theta_{ij}|; (i,j) \in E, i\neq j\}>r_n$, then $\hat E=E$.
\end{theorem}
\begin{remark}
The condition $\min\{|\Theta_{ij}|; (i,j) \in E, i\neq j\}>r_n$ is a minimal signal strength requirement: true edges can only be recovered if the corresponding entries in $\Theta$ are sufficiently large relative to the estimation error $r_n$. This condition is standard in the graphical lasso literature; see, e.g., Theorem~1 in \cite{ravikumar2011high}. In Theorem~\ref{thm:asymptotic} below, this condition is absorbed into the requirement that $\delta_n^{-1}\min\{|\Theta_{ij}|; (i,j) \in E\}\to\infty$.
\end{remark}

Next, we present the asymptotic theory when $n\to\infty$. The following asymptotic result follows immediately from Theorem \ref{thm:main}. 


\begin{theorem} \label{thm:asymptotic}
    Assume that Condition \ref{con:mi} holds. Assume that there exists a sequence $\delta_n\to 0$ as $n\to\infty$, such that the covariance matrix estimate $S$ satisfies
    $$P(\|S -\Sigma\|_\infty>\delta_n)\to 0,$$
   and
	$$\frac{1}{\delta_n}\min\{\|\Theta_{ij}\|; (i,j) \in E, i\neq j\}\to \infty.$$
    Choose the tuning parameter $\gamma_n=\underline{C}_\gamma(\delta_n)$  as in \eqref{eq:lowerbound rho}.
    Then, as $n\to\infty$, with probability tending to 1, the extreme graphical lasso estimator \eqref{eq:eglasso:optimization} has a unique solution $\hat\Theta_{lasso}$.  In addition, 
	$$P(\hat E=E)\to 1 \text{\ \ and\ \ } \Vert\hat\Theta_{lasso}-\Theta\Vert_\infty=O_P\left(\delta_n\right).$$
\end{theorem}


\section{Applying the extreme graphical lasso} \label{sec:estimation}

\subsection{Estimators for $\Sigma^{(k)}$} \label{sec:hat_sigma_k}

The standard statistical inference for the HR model relies on the following result.  Let $\mathbf{Y}$ be the multivariate Pareto distribution from an HR model with variogram $\Gamma$.  \cite{engelke2015estimation} show that
\begin{equation} \label{eq:k:gaussian}
	\left(\log\mathbf{Y}_{-k} - \log(Y_k)\cdot \mathbf1\right) |_{Y_k>1} \sim N(\Gamma_{-k, k}/2, \Sigma^{(k)}),
\end{equation}
where $\Gamma_{-k, k}$ is the $k$-th column of $\Gamma$ with the $k$-th element removed.
Given i.i.d.~observations $\mathbf{X}^i = (X_1^i,\ldots,X_d^i), 1\leq i\leq n$, drawn from $\mathbf{X}$, an empirical counterpart of $\left(\log\mathbf{Y}_{-k} - \log(Y_k)\cdot \mathbf1\right) |_{Y_k>1}$ can be constructed as follows. 
Define the transformed observations
$$
	\hat X^{i}_k= \frac{1}{1-\hat F_k(X^i_k)},
$$
where $\hat F_k(x)=\frac{1}{n+1}\sum_{i=1}^n \mathbb{I}\{X^{i}_k\leq x\}$ is the empirical distribution function based on $X^{i}_k$'s and $\mathbb{I}$ is the indicator function. Then $\mathbf{\hat X}^{i}=(\hat X^{i}_1,\ldots,\hat X^{i}_d)$ resembles a sample of $\mathbf{\tilde X}=(\tilde X_1,\ldots,\tilde X_d)$ with $\tilde X_k = \frac{1}{1-F_k(X_k)}$, albeit not i.i.d.

Consider an intermediate sequence $k_n$ such that $k_n\to \infty$ and $k_n/n\to 0$ as $n\to\infty$. Then as $n\to\infty$, $\|\mathbf{\hat X}^{i}\|_\infty>\frac{n}{k_n}$ mimicks the condition $\|\mathbf{\tilde X}\|_\infty>u$ with $u\to\infty$.
Therefore, 
$$\frac{k_n}{n}\mathbf{\hat X}^{i}\left|\|\mathbf{\hat X}^{i}\|_\infty>\frac{n}{k_n}\right.$$ 
approximately follows the same distribution as $\mathbf{Y}$.

Let $I=\{i:\|\mathbf{\hat X}^{i}\|_\infty>\frac{n}{k_n} \}=:\{i_1,\ldots,i_m\}$, where $m=|I|$, indicating the index set corresponding to $\|\mathbf{\hat X}^{i}\|_\infty>\frac{n}{k_n}$. Denote $\hat{\mathbf{Y}}_j =(\hat{Y}_{i1},\ldots, \hat{Y}_{id}):=\frac{k_n}{n}\mathbf{\hat X}^{i_j}$ for all $j=1,\ldots,m$. Let
$$
	\hat{\mathbf{W}}^{(k)}_i = \log\hat{\mathbf{Y}}_{i,-k} - \log(\hat{Y}_{ik})\cdot \mathbf1,
$$
and $I_k$ be the index set that $I_k = \{i:\hat{Y}_{ik}>1\}$ for each dimension $k=1,\ldots,d$ such that $|I_k|=k_n$.
Then $\Sigma^{(k)}$ can be estimated by
\begin{equation} \label{eq:hat:Sigma:k}
	\hat{\Sigma}^{(k)} := \frac{1}{k_n}\sum_{i \in I_k} \left(\hat{\mathbf{W}}^{(k)}_i  - \frac{1}{k_n}\sum_{i \in I_k} \hat{\mathbf{W}}^{(k)}_i \right)\left(\hat{\mathbf{W}}^{(k)}_i  - \frac{1}{k_n}\sum_{i \in I_k} \hat{\mathbf{W}}^{(k)}_i \right)^T,
\end{equation}
which is the sample covariance matrix using $\hat{\mathbf{W}}^{(k)}_i$ conditional on $\hat{Y}_{ik}>1$.  

Theoretically an estimate of $\Theta$ can be constructed via $\hat\Theta^{(k)}=\left(\hat{\Sigma}^{(k)}\right)^{-1}$.  To achieve sparsity in  $\Theta^{(k)}$, any sparse inverse covariance matrix estimation technique can be applied here.  However, reconstruction of $\Theta$ from a sparse $\hat\Theta^{(k)}$ does not guarantee sparsity on the omitted $k$-th row and column.  \cite{engelke2021learning} propose to estimate a sparse $\hat\Theta^{(k)}$ for each $k$ and then to use a majority vote to decide whether or not each entry of $\Theta$ should be zero.  This approach is shown to be effective in recovering the sparse structure of $\Theta$, when the number of dimension is at a moderate level. For high dimensional case, tuning $d$ graphical lasso models can be cumbersome.


\subsection{Convergence of $S$ to $\Sigma$}

Recall $S$ as an estimator for $\Sigma$ in \eqref{eq:sigma:hat}. We first present the assumptions needed for bounding the estimation errors for $S$. The assumptions are in line with those needed in Theorem 3 in \cite{engelke2021learning}. 

The following condition is needed regarding the tail behavior of $\mathbf{\tilde X}=(\tilde X_1, \ldots, \tilde X_d)$.

\begin{condition}[Assumption 3, \citealp{engelke2021learning}] \label{assumption:second order} Assume that all marginal distributions of the original random vector $\mathbf X$, $F_1,\ldots,F_d$, are continuous. In addition, there exists $\xi'>0$ and $K'<\infty$ independent of $d$, such that for all $J\subset\{1,\ldots,d\}$ with $|J|\in \{2,3\}$ and $q\in(0,1]$,
$$\sup_{x\in[0,1]^3}\left|\frac{1}{q}P\left(\bigcap_{i\in J}\left\{\tilde X_i>\frac{1}{qx_i}\right\}\right)-\frac{P\left\{\bigcap_{i\in J} Y_i>\frac{1}{x_i}\right\}}{P(Y_1>1)}\right|\leq K' q^{\xi'}.$$
\end{condition}
Condition \ref{assumption:second order} is a standard second order condition quantifying the speed of converegence of the tail distribution of $\mathbf{\tilde X}$ towards the limiting distribution $\mathbf{Y}$ on bounded sets. It has been imposed in other asymptotic theories in multivariate extreme value statistics, see e.g. \cite{einmahl2012m} and \cite{engelke2022structure}.

Next, we assume that the variogram in the HR distribution $\Gamma$ has bounded entries.
\begin{condition}[Bounded entries] \label{assumption:non-degenerate} Assume that the variogram $\Gamma$ satisfies that $0<\underline{\lambda}<\inf_{i\neq j}\sqrt{\Gamma_{ij}}\leq \sup_{i\neq j}\sqrt{\Gamma_{ij}}<\overline{\lambda}$, with $\underline{\lambda}$ and $\overline{\lambda}$ independent of $d$. 
\end{condition}
Condition \ref{assumption:non-degenerate} implies the boundedness in the density of the exponent measure, see Assumption 2 in \cite{engelke2021learning}: this condition is required for establishing concentration bounds for estimators of $\Gamma$. In addition, this condition implies that for any pair $(i,j)$ with $i\neq j$, $X_i$ and $X_j$ are asymptotically dependent.

Then we have the following proposition. Its proof is postponed to Appendix \ref{appendix:proof of S estimation}.
\begin{proposition}\label{prop:accuracy of S estimation}
	Assume that Conditions \ref{assumption:second order} and \ref{assumption:non-degenerate} hold. Then for any $\xi<\xi'$, there exists positive constants $C_1$, $C_2$ and $C_3$, depending on $\xi$, $\xi'$, $\underline{\lambda}$ and $\overline{\lambda}$, independent of $d$, such that for any $\varepsilon\geq \varepsilon_n:=C_2 d^3\exp\{-\frac{C_3k_n}{(\log n)^8}\}$,
	\begin{equation}\label{eq:inequality for sigma}
		P\left(\|S -\Sigma\|_\infty>\delta_n\right)\leq \varepsilon,
	\end{equation}
	where $$\delta_n = \delta_n(\varepsilon):=C_1\left\{\left(\frac{k_n}{n}\right)^{\xi}\left(\log\left(\frac{k_n}{n}\right)\right)^2+\frac{1+\sqrt{\frac{1}{C_3}\log (C_2 d^3/\varepsilon)}}{\sqrt{k_n}}\right\}.$$

	In addition, assume that $(\log n)^4\sqrt{\frac{\log d}{k_n}}\to 0$ as $n\to\infty$.  Then
	$$\|S -\Sigma\|_\infty=O_{\text{P}}\left(\left(\frac{k_n}{n}\right)^{\xi}\left(\log\left(\frac{k_n}{n}\right)\right)^2+\sqrt{\frac{\log d}{k_n}}\right), \quad n \to\infty.$$
\end{proposition}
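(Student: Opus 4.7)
The plan is to reduce the bound on $\|S-\Sigma\|_\infty$ to a uniform entrywise concentration inequality for each sample covariance matrix $\hat\Sigma^{(k)}$, and then apply a union bound over $k\in\{1,\ldots,d\}$ and over the matrix entries.

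By Proposition \ref{prop:Sigma:k:Sigma} and the definition of $S$ in \eqref{eq:sigma:hat}, the error decomposes as
\[
S - \Sigma = \frac{1}{d}\sum_{k=1}^d (\hat\Sigma^{(k)} - \tilde\Sigma^{(k)}) - \left(\frac{1}{d^3}\sum_{k=1}^d \mathbf{1}^T(\hat\Sigma^{(k)} - \tilde\Sigma^{(k)})\mathbf{1}\right) \mathbf{1}\mathbf{1}^T,
\]
where each $\hat\Sigma^{(k)}$ is tacitly augmented to $d\times d$ by inserting zeros in row and column $k$ to match $\tilde\Sigma^{(k)}$. Taking $\|\cdot\|_\infty$ and using the crude inequality $|\mathbf{1}^T A\mathbf{1}|\le d^2\|A\|_\infty$, both terms on the right-hand side are bounded by $\max_k\|\hat\Sigma^{(k)}-\tilde\Sigma^{(k)}\|_\infty$, so that $\|S-\Sigma\|_\infty\le 2\max_k\|\hat\Sigma^{(k)}-\tilde\Sigma^{(k)}\|_\infty$. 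The problem therefore reduces to a uniform entrywise control of $\max_{k,i,j}|\hat\Sigma^{(k)}_{ij}-\Sigma^{(k)}_{ij}|$.

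The central technical step is an entrywise concentration bound: for each $(k,i,j)$ with $i,j\neq k$, there should exist constants $\tilde C_1,\tilde C_2,\tilde C_3$, depending only on $\xi,\xi',\underline{\lambda},\overline{\lambda}$, such that for every $\tilde\varepsilon>0$,
\[
\mathbb{P}\left(\left|\hat\Sigma^{(k)}_{ij}-\Sigma^{(k)}_{ij}\right|>\tilde C_1\left\{\left(\frac{k_n}{n}\right)^\xi\left(\log\frac{k_n}{n}\right)^2 + \frac{1+\sqrt{\log(1/\tilde\varepsilon)/\tilde C_3}}{\sqrt{k_n}}\right\}\right)\le \tilde\varepsilon + \tilde C_2\exp\left\{-\frac{\tilde C_3 k_n}{(\log n)^8}\right\}.
\]
This is the analog, for the covariance estimator $\hat\Sigma^{(k)}_{ij}$, of Theorem 3 in \cite{engelke2021learning} for the variogram estimator $\hat\Gamma_{ij}$. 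Its proof splits the deviation into (i) a deterministic bias from the empirical rank transform $\hat F_k$ and from the pre-limit tail approximation, controlled by Condition \ref{assumption:second order} and yielding the $(k_n/n)^\xi(\log(k_n/n))^2$ contribution; and (ii) a stochastic fluctuation of the empirically centered sample covariance of the $\hat{\mathbf{W}}^{(k)}_i$'s around its conditional mean, handled by a Bernstein-type inequality after truncating the sub-exponential summands at a polynomial order in $\log n$. The $(\log n)^8$ polylog factor comes from combining the sub-exponential tails of $\log\hat Y_{ij}-\log\hat Y_{ik}$, the quadratic structure of a covariance, and the pre-estimation of the mean.

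With the entrywise bound in hand, the finite-sample statement follows by a union bound over the at most $d^3$ triples $(k,i,j)$: calibrating $\tilde\varepsilon$ and the constant prefactors so that $\tilde\varepsilon + \tilde C_2\exp\{-\tilde C_3 k_n/(\log n)^8\}\le \varepsilon/d^3$ gives $\|S-\Sigma\|_\infty\le\delta_n$ with probability at least $1-\varepsilon$; the form $C_2 d^3\exp\{\cdot\}$ of $\varepsilon_n$ records precisely the regime $\varepsilon\ge\varepsilon_n$ in which this union bound is meaningful. The asymptotic $O_P$ statement then follows by choosing $\varepsilon=\varepsilon_n\vee(\log d)^{-1}$, for which $\log(C_2 d^3/\varepsilon)=O(\log d)$, so that the hypothesis $(\log n)^4\sqrt{\log d/k_n}\to 0$ simultaneously ensures $\varepsilon_n\to 0$ and delivers the stated $\sqrt{\log d/k_n}$ variance rate. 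The main obstacle is proving the entrywise concentration bound with the correct polylog power $(\log n)^8$: although analogous in spirit to \cite{engelke2021learning}, their estimator is a sample mean of squared differences, whereas $\hat\Sigma^{(k)}$ is an empirically centered sample covariance, so the Bernstein step must be re-derived for a quadratic form in log-Pareto summands and one must verify that no extra polylog factors creep in.
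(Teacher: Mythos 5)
Your overall structure matches the paper's: decompose $S-\Sigma$ via Proposition~\ref{prop:Sigma:k:Sigma}, bound $\|S-\Sigma\|_\infty \le 2\max_k\|\hat\Sigma^{(k)}-\Sigma^{(k)}\|_\infty$, apply a uniform entrywise concentration over the $d^3$ triples $(k,i,j)$, and calibrate the tail parameter against $\varepsilon$. However, there is a genuine gap at precisely the point you yourself flag as ``the main obstacle.'' You propose to re-derive a Bernstein-type entrywise concentration bound from scratch for the empirically centered sample covariance $\hat\Sigma^{(k)}_{ij}$, reasoning that Theorem~3 of \cite{engelke2021learning} only covers the variogram estimator and so ``the Bernstein step must be re-derived for a quadratic form in log-Pareto summands.'' But this re-derivation is unnecessary, and leaving it as an ``obstacle'' leaves the proof incomplete. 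The paper instead observes that, by the linear identity $\tilde\Sigma^{(k)}_{ij} = \tfrac12(\Gamma_{ik}+\Gamma_{jk}-\Gamma_{ij})$ in \eqref{eq:sigma_k} --- which the plug-in estimators also satisfy --- any entrywise variogram error of size $\delta$ translates to an entrywise $\Sigma^{(k)}$ error of at most $\tfrac32\delta$. Hence the uniform variogram concentration of Theorem~3 in \cite{engelke2021learning} (which already carries the $d^3$ union bound and the $(\log n)^8$ polylog factor) transfers verbatim to $\max_k\|\hat\Sigma^{(k)}-\Sigma^{(k)}\|_\infty$ with $C_1 = \tfrac32\overline C$, and there is no new Bernstein analysis to run.

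A secondary omission: to invoke Theorem~3 of \cite{engelke2021learning} one must first check its hypotheses, and your proposal skips this. The paper verifies them explicitly: Condition~\ref{assumption:non-degenerate} implies, via Lemma~S3 of \cite{engelke2021learning}, that their Assumption~4 holds for every $\xi''>0$; combining this with Condition~\ref{assumption:second order} yields their Assumption~1 with effective exponent $\xi = \xi'\xi''/(1+\xi'+\xi'')$, which explains why any $\xi<\xi'$ can be attained by taking $\xi''$ large; and Assumption~2 holds automatically for non-degenerate H\"usler--Reiss models under Condition~\ref{assumption:non-degenerate}. Your calibration $\lambda = \sqrt{\tfrac1{C_3}\log(C_2 d^3/\varepsilon)}$ subject to $\lambda\le\sqrt{k_n}/(\log n)^4$ and the final asymptotic argument agree with the paper once these two missing pieces are filled in.
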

We remark that this proposition does not require a fixed $d$ and allows for $d=d_n\to\infty$ as $n\to\infty$. Nevertheless, the condition $(\log n)^4\sqrt{\frac{\log d}{k_n}}\to 0$ as $n\to\infty$ provides an upper bound for the diverging speed of $d_n$ towards infinity. It depends not only on $n$ but also on the intermediate sequence $k_n$. Combining Theorem \ref{thm:asymptotic} and Proposition \ref{prop:accuracy of S estimation}, we immediately obtain the speed convergence of the extreme graphical lasso estimator $\hat{\Theta}_{lasso}$.

\subsection{Choice of $M$: a theoretical guidance} \label{sec:Mchoice}

In this section, we provide reasoning for choosing $M$ in practice.  We recommend to choose $M$ from the interval $M \in [\lambda_2,\lambda_d],$
where $\lambda_2$ and $\lambda_d$ are the smallest and the largest positive eigenvalues of $\Sigma$. Recall from Proposition~\ref{prop:sigma:theta} that
$$
	\Sigma = \sum_{k=1}^d \lambda_k \mathbf{u}_k \mathbf{u}_k^T,
$$
with $0=\lambda_1 <\lambda_2\leq \ldots\leq\lambda_d$ and $\mathbf{u}_1 = \frac{1}{\sqrt{d}}\mathbf1$. 

The motivation comes from the following upper bound for the mutual incoherence condition.
\begin{proposition} \label{prop:mi:upper:bound}
	The quantity $|||\Omega_{E^cE}(\Omega_{EE})^{-1}|||_{\infty}$ in Condition~\ref{con:mi} can be bounded above by
	$$
		|||\Omega_{E^cE}(\Omega_{EE})^{-1}|||_{\infty} \le \left( |E| \cdot \frac{\lambda_{\max}^2(\Omega) - \lambda_{\min}^2(\Omega)}{\lambda_{\min}^2(\Omega)}\right)^{1/2},
	$$
where $\lambda_{\max}(\Omega)$ denotes the largest eigenvalue of $\Omega$,
$$
    \lambda_{\max}(\Omega) = \max\{\lambda_d^2,M^2\},
$$
and $\lambda_{\min}(\Omega)$ denotes the smallest eigenvalue of $\Omega$,
$$
    \lambda_{\min}(\Omega) = \min\{\lambda_2^2,M^2\}.
$$
\end{proposition}

From Proposition~\ref{prop:mi:upper:bound}, choosing $M$ smaller than $\lambda_2$ or larger than $\lambda_d$ will increase the upper bound for $|||\Omega_{E^cE}(\Omega_{EE})^{-1}|||_{\infty}$.  Therefore we recommend the choice of $M \in [\lambda_2,\lambda_d]$ to minimize this upper bound.

\subsection{Choosing $M$: numerical examples}
We now investigate two examples of $d=4$ and show how the choice of $M$ affects the mutual incoherence condition $|||\Omega_{E^cE}(\Omega_{EE})^{-1}|||_{\infty}$.   Figure~\ref{fig:mi} shows the graph structures for the two examples, the star graph and the diamond graph. The Mutual Incoherence conditions for these two graphs in the classical graphical lasso setting are studied in \cite{ravikumar2011high}.

\tikzset{
    >=stealth',
    punkt/.style={
           rectangle,
           rounded corners,
           draw=black, very thick,
           text width=6.5em,
           minimum height=2em,
           text centered},
    pil/.style={
           ->,
           thick,
           shorten <=2pt,
           shorten >=2pt,}
}
\newsavebox{\mytikzpic}
\begin{lrbox}{\mytikzpic} 
     \begin{tikzpicture}
    \begin{scope}[xshift=-2cm,yshift=1cm,scale=.8]
       \node[draw,circle] (s1) at (0,-2) {$X_1$};
      \node[draw,circle] (s2) at (-2,0) {$X_2$};
      \node[draw,circle] (s3) at (2,0) {$X_3$};
      \node[draw,circle] (s4) at (0,-4.5) {$X_4$};
      \draw[-] (s1) to node [above right] {} (s3);
      \draw[-] (s1) to node [above left] {} (s2);
      \draw[-] (s4) to node [below] {} (s1);
    \end{scope}
    
     \begin{scope}[xshift=6cm,yshift=1cm,scale=.8]
       \node[draw,circle] (s1) at (0,0) {$X_2$};
      \node[draw,circle] (s2) at (-2,-2) {$X_1$};
      \node[draw,circle] (s3) at (2,-2) {$X_4$};
      \node[draw,circle] (s4) at (0,-4) {$X_3$};
      \draw[-] (s1) to node [above right] {} (s3);
      \draw[-] (s1) to node [above left] {} (s2);
      \draw[-] (s4) to node [below left] {} (s2);
      \draw[-] (s4) to node [below left] {} (s3);
      \draw[-] (s4) to node [below] {} (s1);
    \end{scope}

     \node at (-2,-3.5) {(a)};
     \node at (6,-3.5) {(b)};
  \end{tikzpicture}    

\end{lrbox}
  \begin{figure}[h]
    \centering 
    \usebox{\mytikzpic} 
    \caption{(a) Star graph; (b) Diamond graph.}
    \label{fig:mi}
\end{figure} 

\subsubsection{Star graph} \label{sub:star}
The condition $\Theta\mathbf1=\mathbf0$ imposes extra constraints on the precision matrix $\Theta$. We consider the following parameterization
$$
	\Theta = 
	\begin{pmatrix}
	x+2 & -x & -1 & -1 \\
	-x & 2x & 0 & 0 \\
	-1 & 0 & 1 & 0 \\
	-1 & 0 & 0 & 1 \\
	\end{pmatrix}
	, \quad \text{for some } x>0
$$
which reflects the star graph in Figure~\ref{fig:mi}(a).  The top two panels of Figure~\ref{fig:M} show the values of $|||\Omega_{E^cE}(\Omega_{EE})^{-1}|||_{\infty}$ plotted against the values of $M$ for star graphs with parameters $x=0.8$ and $x=1.2$.  As $[\lambda_2,\lambda_d]$ is suggested as the practical range for $M$ from Section~\ref{sec:Mchoice}, the values of $\lambda_2$ and $\lambda_d$ are indicated as dash vertical lines in each graph.  We observe that in both examples, $\lambda_2$ is a good indication of the lowest possible value of $|||\Omega_{E^cE}(\Omega_{EE})^{-1}|||_{\infty}$ while the majority of the interval $M \in [\lambda_2,\lambda_d]$ satisfies the Mutual Incoherence condition.

\subsubsection{Diamond graph}

Now consider the diamond graph in Figure~\ref{fig:mi}(b) corresponding to the precision matrix 
$$
	\Theta =  
	\begin{pmatrix}
	x+1 & -x & -1 & 0 \\
	-x & x+2 & -1 & -1 \\
	-1 & -1 & 3 & -1\\
	0 & -1 & -1 & 2 \\
	\end{pmatrix}
	, \quad \text{for some } x>0.
$$
We plot the values of $|||\Omega_{E^cE}(\Omega_{EE})^{-1}|||_{\infty}$ against the values of $M$ for parameters $x=0.8$ and $x=1.2$ on the lower two panels of Figure~\ref{fig:M}.  Again we observe that while $\lambda_2$ is close the minimum of $|||\Omega_{E^cE}(\Omega_{EE})^{-1}|||_{\infty}$, the interval $M \in [\lambda_2,\lambda_d]$ in both examples satisfies the Mutual Incoherence condition.

\begin{figure}[H]
	\centering
  	\includegraphics[width=15cm]{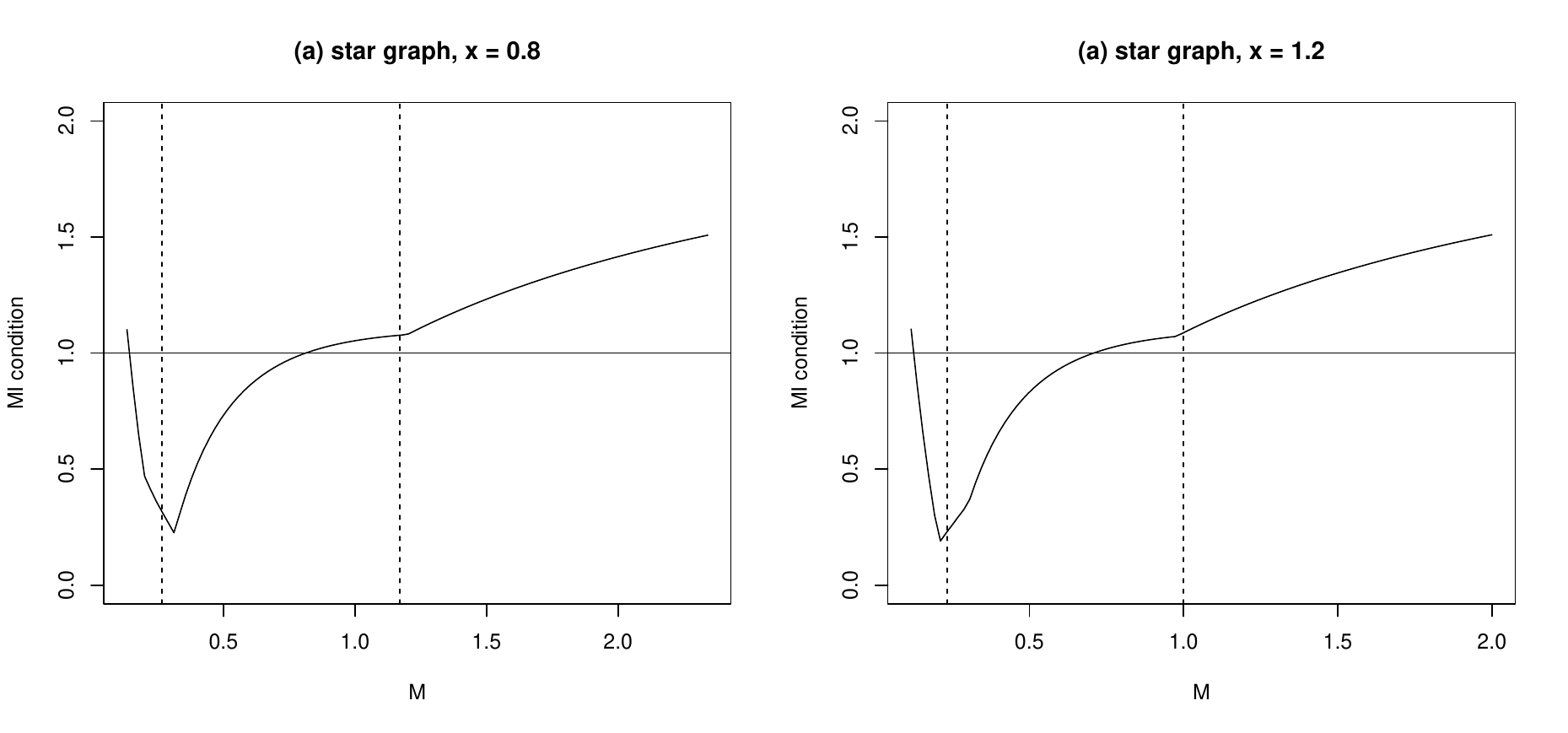}
  	\includegraphics[width=15cm]{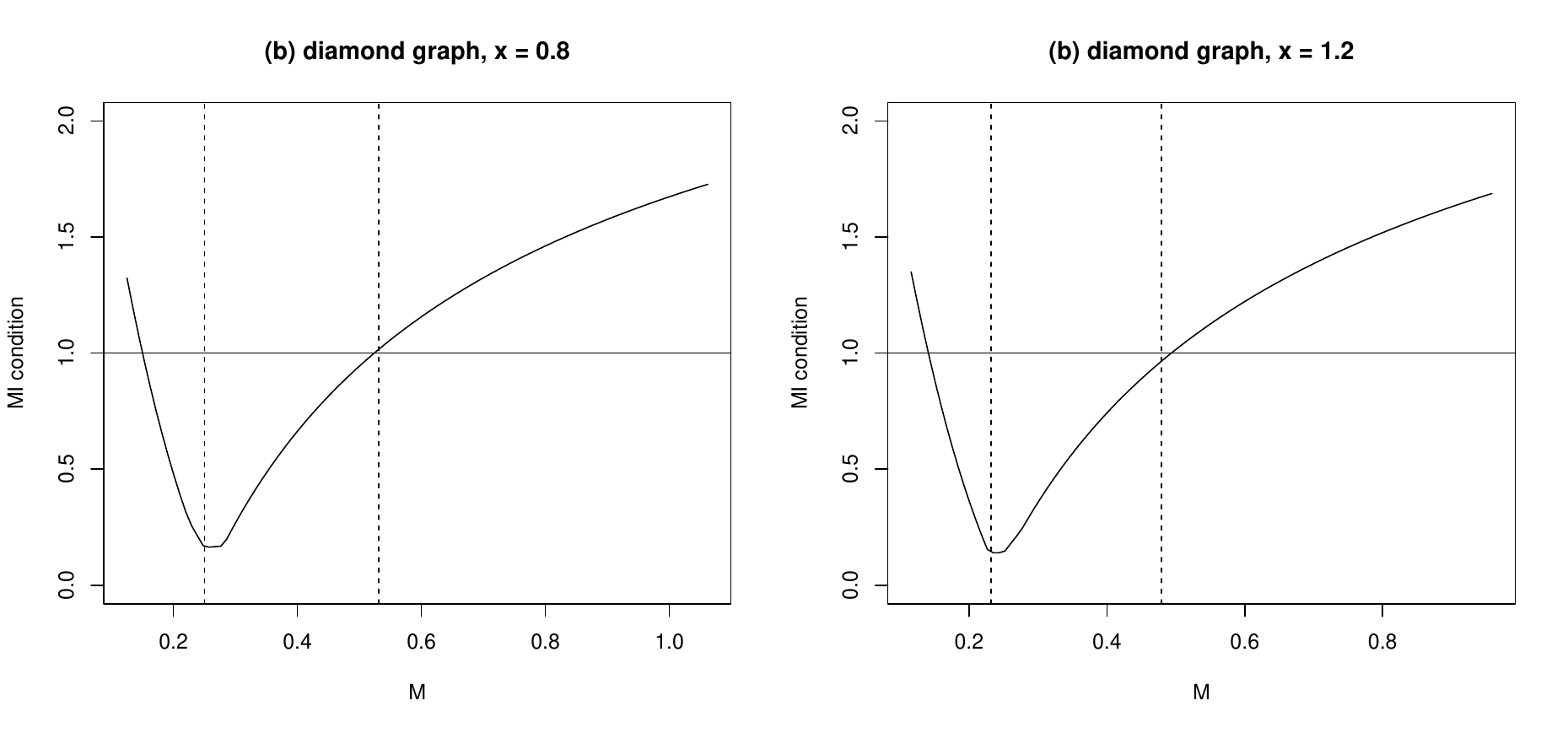}
	\caption{The curves of $|||\Omega_{E^cE}(\Omega_{EE})^{-1}|||_{\infty}$ versus $M$ for the star graph and the diamond graph, each with parameter values $x=0.8$ and $x=1.2$.  The two dash vertical lines in each graph indicates the values of $\lambda_2$ and $\lambda_d$ for the corresponding $\Sigma$. }
	\label{fig:M}
\end{figure}

\subsection{Correlation normalization}
In practice, we normalize the estimated covariance matrix $S$ to a correlation matrix before applying the extreme graphical lasso. That is, we replace $S$ by $D^{-1/2} S D^{-1/2}$, where $D = \mathrm{diag}(S)$. This normalization ensures that the penalty parameter $\gamma_n$ has a consistent scale across different dimensions and different graph structures, simplifying the tuning procedure. After estimation, the precision matrix $\hat\Theta$ is transformed back to the original scale.

\section{Simulations} \label{sec:simulations}
We conduct a simulation study to evaluate the performance of the extreme graphical lasso (EGLasso) algorithm and compare it with the EGLearn algorithm with neighborhood selection \citep{engelke2021learning}. Among the variants of EGLearn proposed in \cite{engelke2021learning}, the neighborhood selection variant outperforms graphical lasso as the base learner in every simulation scenario and has the strongest theoretical guarantees for consistent graph recovery.

\subsection{Simulation setup}
Throughout the simulation study, we sample random graph structures using the Barab\'{a}si--Albert preferential attachment model \citep{albert2002statistical} with $d$ nodes and degree parameter $q=1,2$, denoted $BA(d, q)$. The case $q=1$ corresponds to a tree graph, while $q=2$ yields a denser graph.

For a given graph structure, we sample the precision matrix $\Theta$ as follows. For each edge in the graph, we draw each non-zero upper triangular entry independently from a uniform distribution $U[-5,-2]$. The entries of $\Theta$ corresponding to non-edges are set to zero, the lower triangular entries mirror their upper triangular counterparts, and the diagonal entries are filled such that each row sum is zero. The resulting variogram matrix $\Gamma$ is then computed from $\Theta$.

We simulate $n$ observations from the max-stable H\"{u}sler--Reiss distribution with variogram $\Gamma$. After simulating $n$ max-stable observations, we apply the peaks-over-threshold transformation: we retain the $k_n$ largest observations in terms of the radial component and transform them to approximate multivariate Pareto observations. Then, we compute the sample covariance matrix $S$ from the $k_n$ transformed exceedances and apply both the EGLasso and EGLearn algorithms. For the EGLasso, the tuning parameter $M$ is chosen to be $M=\hat\lambda_2$, where $\hat\lambda_2$ is the smallest positive eigenvalue of $S$, motivated by the theoretical guidance in Section~\ref{sec:Mchoice}.

To evaluate the estimated graph, we use the F1-score defined as follows:
$$\text{F1-Score}=\frac{|E\bigcap \hat E|}{\frac{1}{2}\left(|E|+|\hat E|\right)},$$
where $E$ and $\hat E$ are the edge sets for the true and estimated graphs, respectively. For fair comparison, we report the oracle-tuned F1 score for both methods: for EGLasso, we select the penalty $\gamma_n$ that maximizes the F1 score over the grid $\log_{10}(\gamma_n) \in \{-1.2, -1.1, \ldots, 0\}$; for EGLearn, we analogously select the penalty $\rho$ that maximizes the F1 score over the grid $\rho \in\{0.175, 0.2, 0.225, \ldots, 0.475\}$. Note for both algorithms, there are 13 values in the tuning grid. This isolates the effect of the estimation method from the tuning procedure. For each setting, we repeat the simulations for $m=100$ samples. 

We maintain the sample size as $n = \lceil k_n^{1/0.7} \rceil$ so that $k_n \approx n^{0.7}$, consistent with the theoretical guidance for the intermediate sequence. We consider dimensions $d \in \{20, 100\}$ with several $k_n/d$ ratios, resulting in a range of settings from the classical regime ($k_n/d = 5$) to the high-dimensional regime ($k_n/d \leq 1$).

\subsection{Graph recovery} \label{sec:sim:f1}

Figure~\ref{fig:f1_maxstable} displays boxplots of the F1 scores for both EGLasso and EGLearn across different settings. Both methods improve as the ratio $k_n/d$ increases, reflecting the benefit of having more exceedances relative to the dimension. For the denser $BA(d,2)$ graphs, EGLasso and EGLearn achieve comparable F1 scores: at $d=100$ and $k_n/d=5$, the median F1 scores are $0.87$ and $0.93$, respectively. For the tree graphs $BA(d,1)$, EGLearn with neighborhood selection has a clear advantage, particularly at $d=100$, where it achieves near-perfect recovery (median F1 $\approx 1.0$ at $k_n/d=2.5$) while EGLasso plateaus around $0.75$. This gap can be attributed to the fact that neighborhood selection exploits the conditional independence structure node-by-node, which is particularly effective for sparse tree graphs. In the high-dimensional regime $k_n/d \leq 1$, both methods struggle to achieve a high F1 score.  Simulations based on multivariate Pareto samples (rather than max-stable samples) yield similar patterns.\footnote{Results are available from the authors upon request.}
\begin{figure}[htp]
	\centering
	\includegraphics[width=\textwidth]{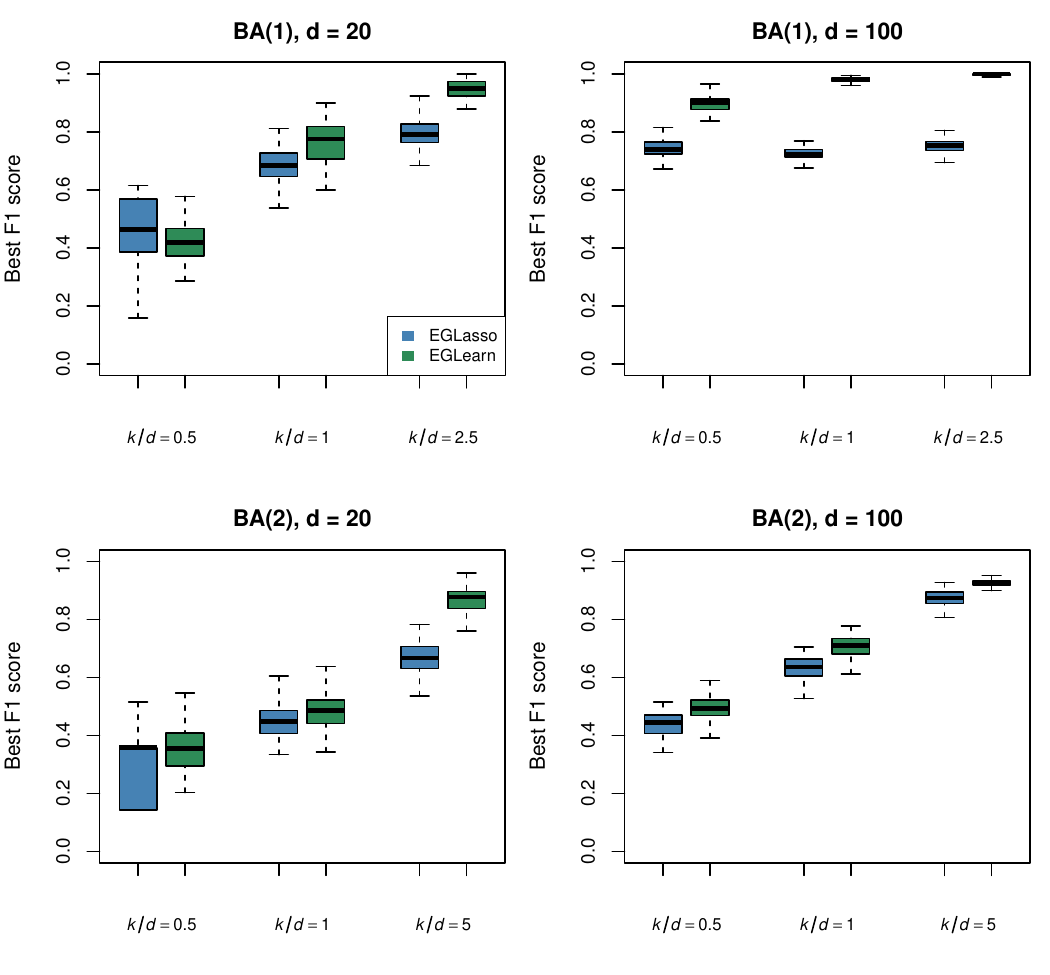}
	\caption{Oracle-tuned F1 scores for EGLasso (blue) and EGLearn (green) under the max-stable data-generating process. Upper panel: $BA(d,1)$; lower panel: $BA(d,2)$. Left: $d=20$; right: $d=100$. Boxplots are over $100$ replications. The $x$-axis shows the ratio $k_n/d$.}
	\label{fig:f1_maxstable}
\end{figure}

\subsection{Graph recovery: large sample illustration} \label{sec:sim:largesample}

To visualize the graph recovery of the EGLasso algorithm, we conduct a large-sample simulation with $d=20$ and $n=5000$, yielding $k_n = \lfloor n^{0.7} \rfloor = 388$ exceedances ($k_n/d \approx 19$). For each of $m=100$ replications, we select the oracle penalty $\gamma_n$ that maximizes the F1 score.

Figure~\ref{fig:aggregated} displays the true graph alongside the aggregated estimated graph, where the thickness of each edge reflects the fraction of replications in which it is detected. True edges are shown in black and false positives in red.  For the $BA(20,1)$ tree (upper panels), the EGLasso algorithm recovers the true edges consistently (median F1 $= 0.86$ at median $\gamma_n = 0.50$), with only a few false positive edges appearing at low frequency. For the denser $BA(20,2)$ graph (lower panels), the true edges are again detected with high frequency, though more false positives appear (median F1 $= 0.80$ at median $\gamma_n = 0.10$). The lower oracle $\gamma_n$ for $BA(20,2)$ reflects the denser true graph requiring less penalization to retain the larger number of true edges.
\begin{figure}[htp]
	\centering
	\includegraphics[width=\textwidth]{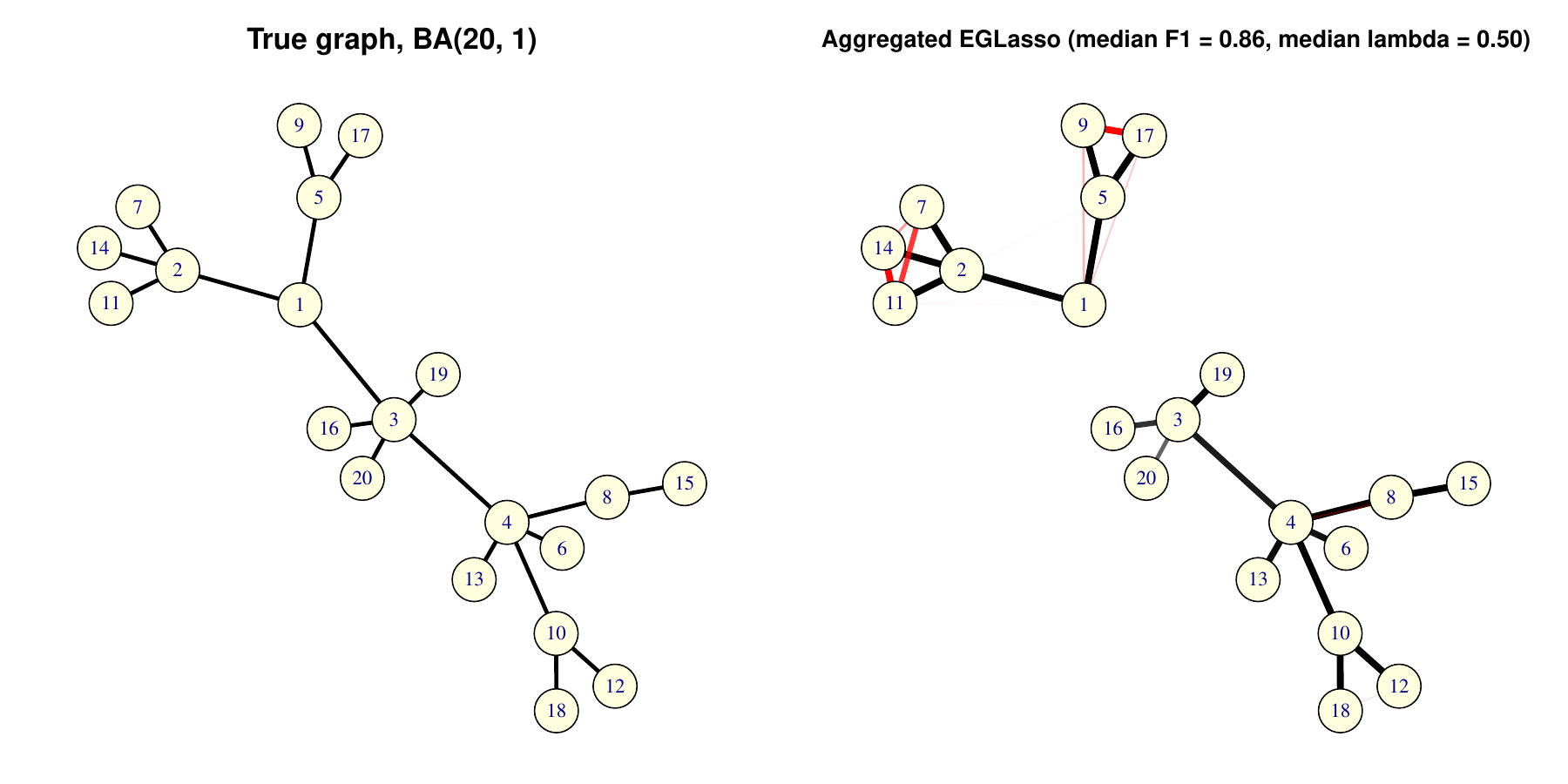}\\[0.5cm]
	\includegraphics[width=\textwidth]{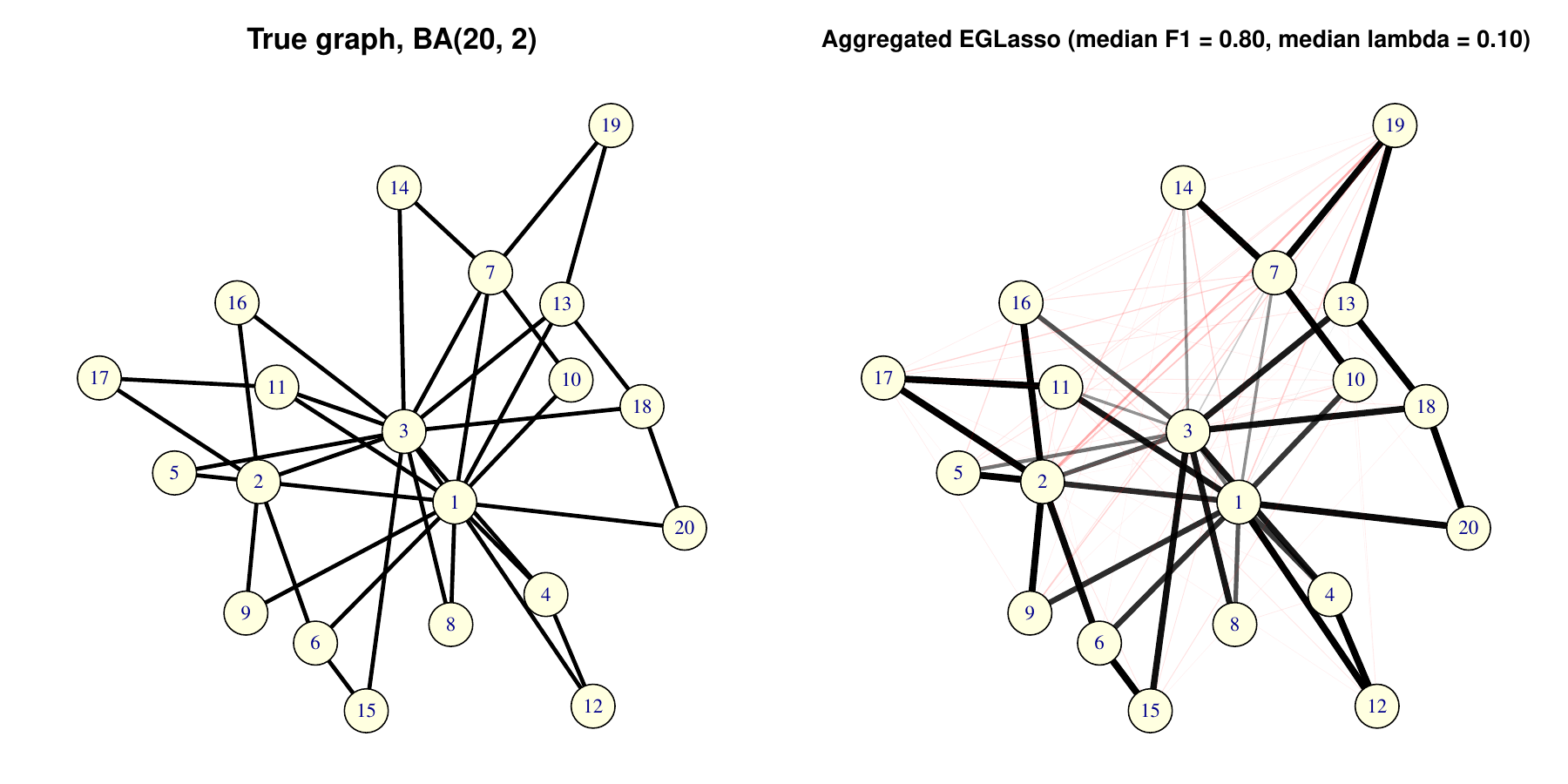}
	\caption{True graph (left) and aggregated estimated graph (right) based on 100 replications with $d=20$, $n=5000$. Upper panel: $BA(20,1)$ (tree); lower panel: $BA(20,2)$ (dense). Edge thickness in the aggregated graph reflects the detection frequency across replications. Black edges correspond to true edges; red edges are false positives. The oracle penalty $\gamma_n$ is selected per replication to maximize the F1 score.}
	\label{fig:aggregated}
\end{figure}

\subsection{Computational efficiency} \label{sec:sim:timing}

Figure~\ref{fig:timing_maxstable} displays boxplots of computation times (in seconds) for both methods. At $d=20$, both methods are fast (under one second per sample) and timing differences are negligible. At $d=100$, the computational advantage of EGLasso becomes apparent: for the $BA(100,2)$ graph, EGLasso is approximately $3$--$5$ times faster than EGLearn at $k_n/d \geq 1$, with median computation times of $5$--$9$ seconds compared to $23$--$31$ seconds for EGLearn.  For the $BA(100,1)$ tree, EGLasso achieves a $2$--$4$ times speedup at $k_n/d \geq 1$. At $k_n/d = 0.5$, the rank deficiency of the sample covariance matrix $S$ leads to more iterations in the EGLasso algorithm, reducing the speed advantage.

The computational gain of EGLasso stems from its $O(d^2)$ coordinate descent updates per iteration, compared to the $O(d^3)$ matrix inversion steps required in each of the $d$ neighborhood regressions of EGLearn. Beyond computation time, EGLearn also has substantially higher memory requirements, as it stores $d$ separate regression problems simultaneously. To provide a fair comparison, all simulations were run on a computing cluster with 185GB RAM, ensuring that EGLearn was not limited by memory constraints. On a standard workstation with 64GB RAM, the speedup of EGLasso over EGLearn increases to approximately $17$ times for the $BA(100,2)$ setting at $k_n/d=1$. The gap in computation time will be further increased if multiple replications are run in parallel, because EGLearn processes will compete for limited memory.

\begin{figure}[htp]
	\centering
	\includegraphics[width=\textwidth]{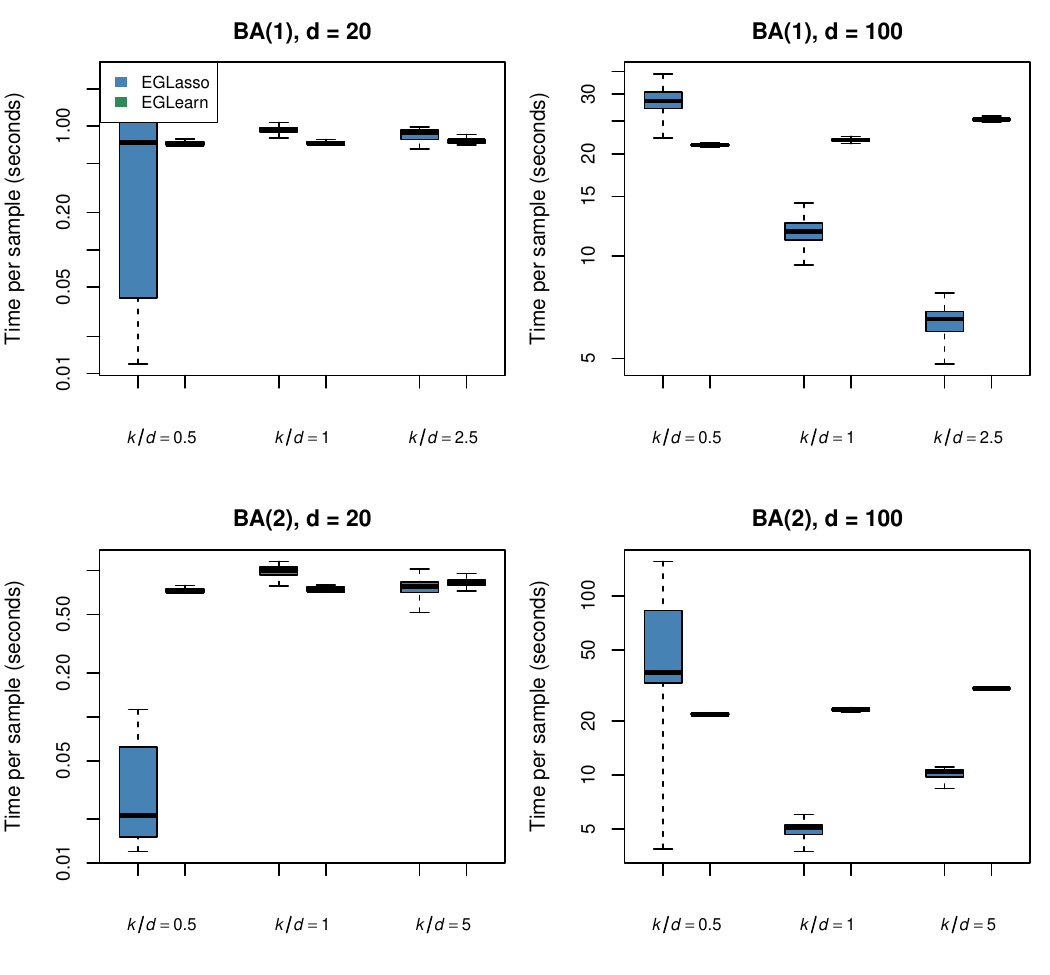}
	\caption{Computation time (seconds) for EGLasso (blue) and EGLearn (green) under the max-stable data-generating process. Upper panel: $BA(d,1)$; lower panel: $BA(d,2)$. Left: $d=20$; right: $d=100$. Boxplots are over $100$ replications. The $x$-axis shows the ratio $k_n/d$.}
	\label{fig:timing_maxstable}
\end{figure}

\section{Applications} \label{sec:application}
To illustrate the application of the extreme graphical lasso, we apply it to two real data examples. Besides applying the extreme graphical lasso, we also apply the EGLearn algorithm with neighborhood selection \citep{engelke2021learning}. For each algorithm, we present estimated graphs at three penalty levels, with the penalty parameters chosen to yield a comparable number of edges. For both algorithms, we use $k_n=\lfloor n^{0.7}\rfloor$ for estimating the covariance matrix $\Sigma$ and the variogram $\Gamma$. 

The first dataset consists of spot foreign exchange rates denominated by the British Pound
sterling between 2005 and 2020. The dataset consists of $n = 3790$ daily observations with $d = 26$ currencies, preprocessed by taking the absolute values of the de-garched log-returns, see \cite{engelke2022structure}. Figure~\ref{fig:exchange application} displays the estimated graphs for the exchange rate data. At $\gamma_n=0.35$ (62 edges), the EGLasso graph reveals a dense network with two main clusters: one among Southeast Asian currencies (HKG, TWN, MYS, THA, SGP, KOR) and another among European currencies (DNK, CZE, HUN, EUR, SWE, CHE). As the penalty increases to $\gamma_n=0.45$ (43 edges) and $\gamma_n=0.55$ (25 edges), these clusters become more pronounced and separated. The EGLearn algorithm at comparable edge counts ($\rho=0.25, 0.35, 0.55$) identifies a similar overall structure but tends to maintain connectivity as much as possible; for instance, at $\rho=0.55$ (27 edges), the EGLearn graph remains largely connected through bridge nodes, while EGLasso at $\gamma_n=0.55$ (25 edges) separates the Asian and European clusters more clearly.
\begin{figure}[htp]
	\centering
  	\includegraphics[width=15cm]{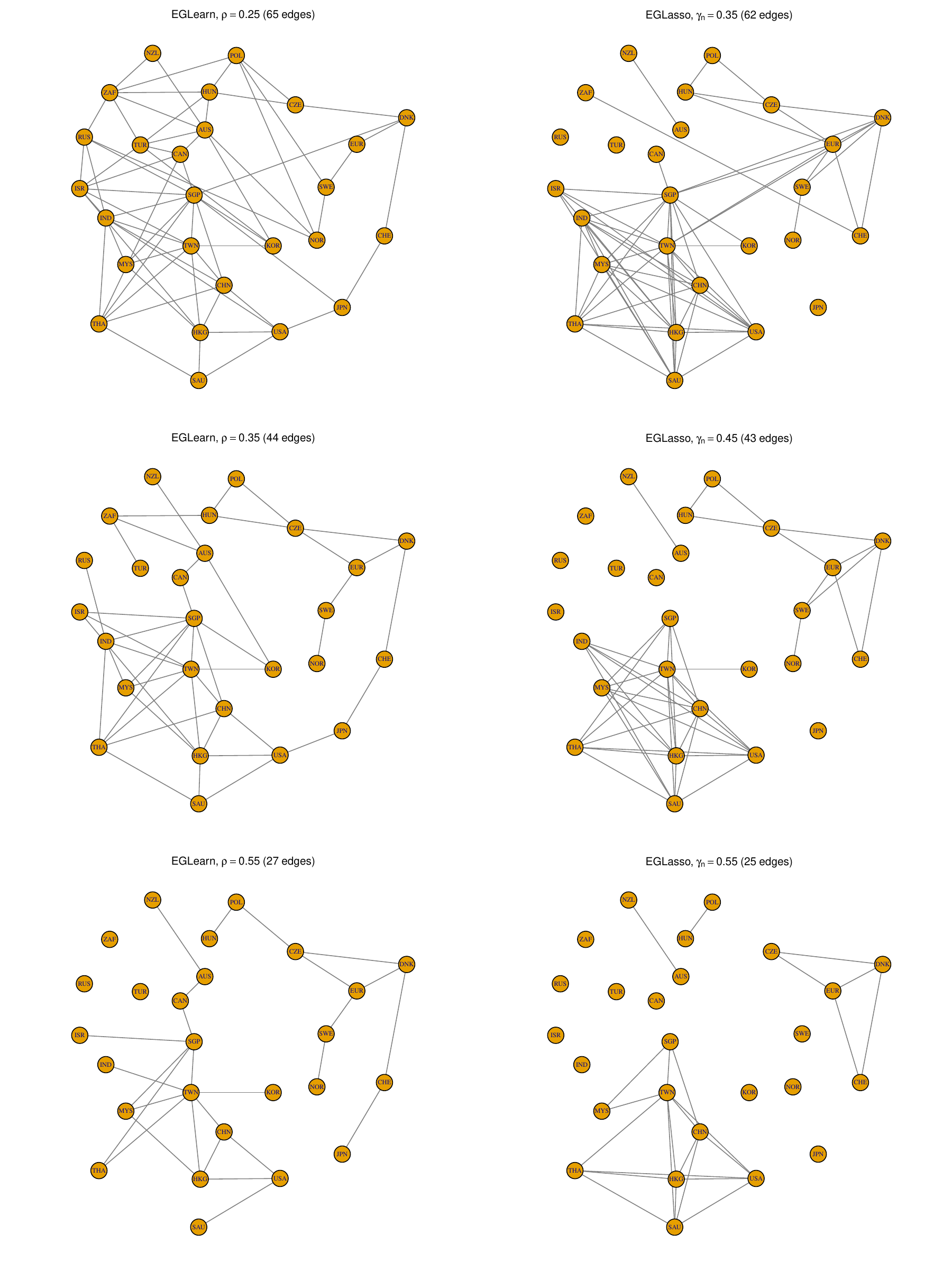}
	\caption{Exchange rate data: estimated graphical structures using EGLearn (left) and EGLasso (right) at three penalty levels. The penalties are chosen to yield a comparable number of edges.}
	\label{fig:exchange application}
\end{figure}

The second dataset concerns the river discharge at gauging stations from the upper Danube basin.  This dataset, as preprocessed in \cite{asadi2015extremes}, consists of $d=31$ stations and $n=428$ observations. Figure~\ref{fig:danube application} shows the estimated graphs at three penalty levels. At $\gamma_n=0.45$ (88 edges) and $\gamma_n=0.55$ (68 edges), the EGLasso algorithm identifies clusters of stations that reflect local hydrological catchments, with several disconnected components. The EGLearn algorithm at comparable edge counts ($\rho=0.04$ and $0.07$) produces more connected, tree-like structures. At the sparsest level ($\gamma_n=0.70$, 53 edges vs $\rho=0.11$, 51 edges), both methods reveal a similar backbone structure following the river network.

Across both applications, a structural difference between the two methods is apparent: EGLearn with neighborhood selection tends to preserve global connectivity. By contrast, EGLasso penalizes the full precision matrix and can zero out all edges of a node simultaneously, producing disconnected components even at moderate sparsity levels.  As a consequence, when the true underlying graph is expected to be connected (as in the river network), EGLearn captures the global topology more faithfully. On the other hand, when the goal is to discover local subgroups or clusters (as in the exchange rate data), EGLasso may yiled more informative subgroup structures.
\begin{figure}[htp]
	\centering
  	\includegraphics[width=15cm]{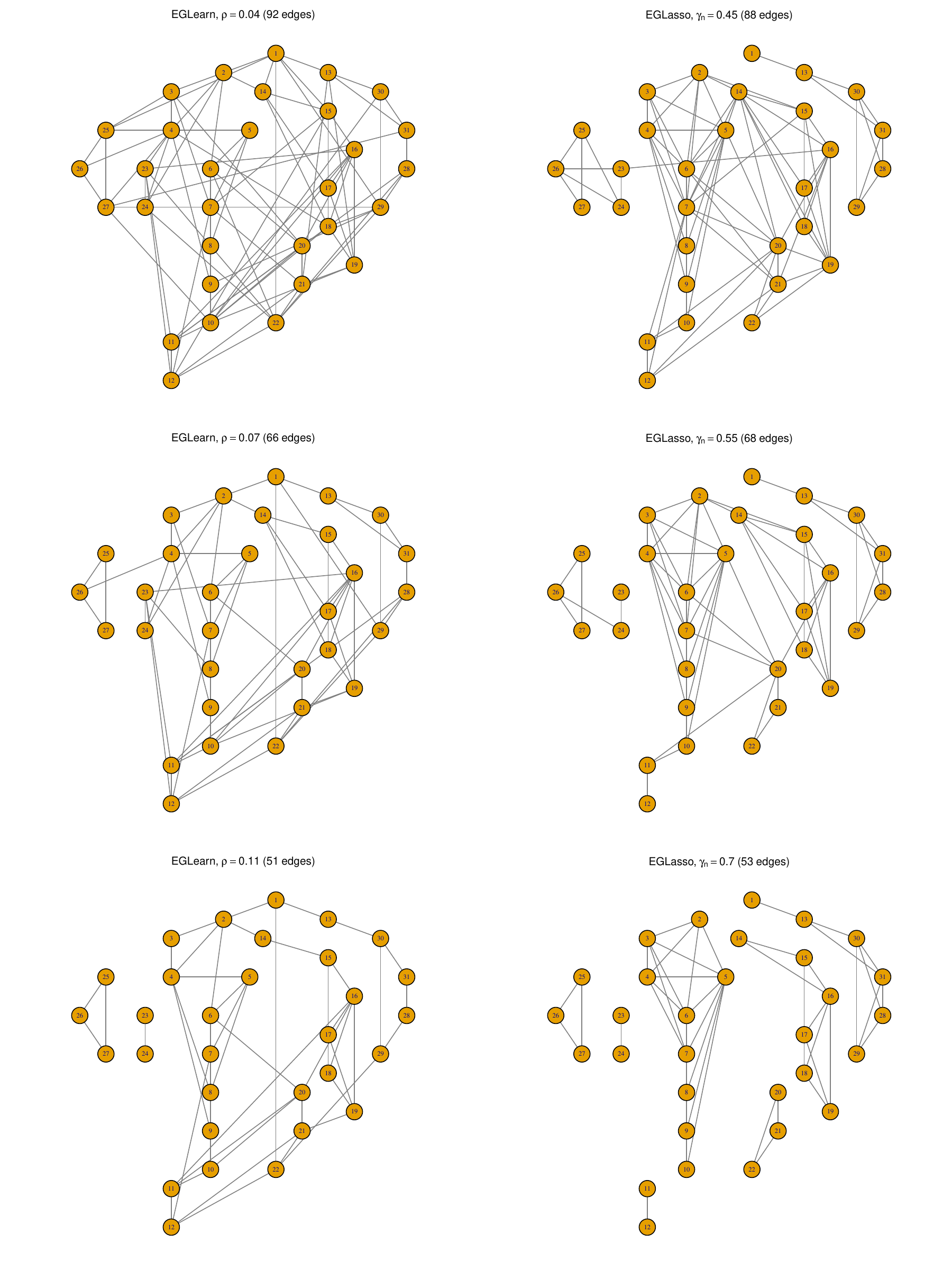}
	\caption{Danube river flow data: estimated graphical structures using EGLearn (left) and EGLasso (right) at three penalty levels. The penalties are chosen to yield a comparable number of edges.}
	\label{fig:danube application}
\end{figure}

\section*{Acknowledgements}

Phyllis Wan is supported by the Veni grant from the Dutch Research Council (VI.Veni.211E.034).  The authors would like to thank the associate editor and two anonymous referees for useful comments, and Frank R\"ottger and Rutger-Jan Lange for helpful discussions.


\bibliographystyle{chicago}
\bibliography{ref}


\appendix




\section{Proof of propositions in Sections~\ref{sec:hr} and \ref{sec:eglasso}} \label{appendix:prop:proofs}


\begin{proof}[Proof of Proposition~\ref{prop:variogram}]

Let $\Gamma$ be the variogram of a random vector $\mathbf{W}$ with positive definite covariance matrix $\Sigma$.  Then by definition $\Gamma$ is symmetric and has diagonal entries 0.   From \eqref{eq:sigma:gamma}, we can write
$$
	\Gamma = \text{diag}(\Sigma) \cdot \mathbf1^T + \mathbf1 \cdot \text{diag}(\Sigma) - 2\Sigma.
$$
For any $\mathbf{x}^T\mathbf1=0$,
$$
	\mathbf{x}^T \Gamma \mathbf{x} = \mathbf{x}^T\text{diag}(\Sigma) \cdot \mathbf1^T\mathbf{x} + \mathbf{x}^T\mathbf1 \cdot \text{diag}(\Sigma)\mathbf{x} - 2\mathbf{x}^T\Sigma\mathbf{x} = - 2\mathbf{x}^T\Sigma\mathbf{x}  < 0.
$$
Therefore $\Gamma  \in \mathcal{D}_0$.

Conversely, for any $\Gamma \in \mathcal{D}_0$, we will construct a positive definite covariance matrix such that the corresponding random vector has variogram $\Gamma$.  Let
$$
	\Sigma_k := \Gamma \mathbf{e}_k \mathbf1^T + \mathbf1\mathbf{e}_k^T \Gamma - \Gamma + \mathbf{e}_k \mathbf{e}_k^2.
$$
Any $\mathbf{y} \in \mathbb{R}^d$ can be written as
$$
	\mathbf{y} = \mathbf{x} + z\cdot \mathbf{e}_k
$$
for some $z \in \mathbb{R}$ and $\mathbf{x}^T\mathbf1=0$.  Then
\begin{eqnarray*}
	\mathbf{y}^T\Sigma_k\mathbf{y} &=& (\mathbf{x} + z\cdot  \mathbf{e}_k)^T(\Gamma \mathbf{e}_k \mathbf1^T + \mathbf1\mathbf{e}_k^T \Gamma - \Gamma + \mathbf{e}_k \mathbf{e}_k^T) (\mathbf{x} + z\cdot  \mathbf{e}_k) \\
	&=& (\mathbf{x} + z\cdot  \mathbf{e}_k)^T(\Gamma \mathbf{e}_k \mathbf1^T +  \mathbf1\mathbf{e}_k^T \Gamma - \Gamma) (\mathbf{x} + z\cdot  \mathbf{e}_k)  +  \left(\mathbf{e}_k^T (\mathbf{x} + z\cdot  \mathbf{e}_k)\right)^2 \\
	&=&  \mathbf{x} ^T(\Gamma \mathbf{e}_k \mathbf1^T + \mathbf1\mathbf{e}_k^T \Gamma - 2\Gamma ) \mathbf{x}  + z \cdot \mathbf{x}^T(\Gamma \mathbf{e}_k \mathbf1^T + \mathbf1\mathbf{e}_k^T \Gamma - 2\Gamma ) \mathbf{e}_k +  \\
	&& \quad + z\cdot \mathbf{e}_k^T(\Gamma \mathbf{e}_k \mathbf1^T + \mathbf1\mathbf{e}_k^T \Gamma - 2\Gamma ) \mathbf{x} + z^2 \cdot  \mathbf{e}_k^T(\Gamma \mathbf{e}_k \mathbf1^T + \mathbf1\mathbf{e}_k^T \Gamma - 2\Gamma ) \mathbf{e}_k+  \left(\mathbf{e}_k^T (\mathbf{x} + z\cdot\mathbf{e}_k)\right)^2 \\
	&=& -\mathbf{x}^T \Gamma \mathbf{x} +  \left(\mathbf{e}_k^T (\mathbf{x} + z\cdot\mathbf{e}_k)\right)^2 \\
	&\ge &0,
	\end{eqnarray*}
where the equality holds if and only if $\mathbf{x} = \mathbf0$ and $z=0$.  Therefore $\Sigma_k$ is a positive definite matrix.  It can be shown that
$$
	\Gamma = \text{diag}(\Sigma) \cdot \mathbf1^T + \mathbf1 \cdot \text{diag}(\Sigma) - 2\Sigma.
$$
Therefore $\Gamma$ is the variogram of a random vector with covariance matrix $\Sigma_k$.
\end{proof}

\begin{proof}[Proof of Proposition~\ref{prop:sigma:gamma}]

Since $\sum_{k=1}^d W_k' = \sum_{k=1}^d (W_k - \bar{W}) \equiv 0$, we have
$$
	\Sigma \cdot \mathbf1 = E\left(\sum_{k=1}^dW'_k \cdot \mathbf{W}'\right) = \mathbf0.
$$
From \eqref{eq:sigma:gamma}, we can write
$$
	\Gamma = \text{diag}(\Sigma) \cdot \mathbf1^T + \mathbf1 \cdot \text{diag}(\Sigma) - 2\Sigma.
$$
Therefore
\begin{eqnarray*}
	&& -\frac{1}{2}\left(I - \frac{\mathbf{1}\mathbf{1}^T}{d}\right) \Gamma \left(I - \frac{\mathbf{1}\mathbf{1}^T}{d}\right) \\
	&=& -\frac{1}{2}\left(I - \frac{\mathbf{1}\mathbf{1}^T}{d}\right) \left( \text{diag}(\Sigma) \cdot \mathbf1^T + \mathbf1 \cdot \text{diag}(\Sigma) - 2\Sigma \right)\left(I - \frac{\mathbf{1}\mathbf{1}^T}{d}\right) \\
	&=& -\frac{1}{2}\left(I - \frac{\mathbf{1}\mathbf{1}^T}{d}\right) \text{diag}(\Sigma) \cdot \mathbf1^T  \left(I - \frac{\mathbf{1}\mathbf{1}^T}{d}\right)  -\frac{1}{2}\left(I - \frac{\mathbf{1}\mathbf{1}^T}{d}\right) \mathbf1 \cdot \text{diag}(\Sigma) \left(I - \frac{\mathbf{1}\mathbf{1}^T}{d}\right) \\
	&& \quad +\left(I - \frac{\mathbf{1}\mathbf{1}^T}{d}\right) \Sigma \left(I - \frac{\mathbf{1}\mathbf{1}^T}{d}\right) \\
	&=&\left(I - \frac{\mathbf{1}\mathbf{1}^T}{d}\right) \Sigma \left(I - \frac{\mathbf{1}\mathbf{1}^T}{d}\right) \\
	&=& \Sigma,
\end{eqnarray*}
using the fact that $\mathbf1^T  \left(I - \frac{\mathbf{1}\mathbf{1}^T}{d}\right) = 0$ and $\Sigma \cdot \mathbf1=\mathbf0$.
\end{proof}

\begin{proof}[Proof of Proposition~\ref{prop:sigma:sigma_k}]

Let $\mathbf{W} \sim N(\mathbf0,\Sigma)$.  Then $\Sigma^{(k)}$ is the covariance matrix of 
$$
	\mathbf{W} - W_k \cdot \mathbf1 = (I - \mathbf1 \mathbf{e}_k) \cdot \mathbf{W}.
$$
From the linear transformation of Gaussian distributions,
$$
	\tilde\Sigma^{(k)} = (I - \mathbf1 \mathbf{e}_k^T) \cdot \Sigma \cdot (I - \mathbf1 \mathbf{e}_k^T)^T.
$$

For the reverse, we have
\begin{eqnarray*}
\frac{1}{d} \sum_{k=1}^d \tilde\Sigma^{(k)} &=& \Sigma - \frac{1}{d} \sum_{k=1}^d \mathbf1 \mathbf{e}_k^T\Sigma - \frac{1}{d} \sum_{k=1}^d \Sigma \mathbf{e}_k \mathbf1^T+  \frac{1}{d} \sum_{k=1}^d \mathbf1 \mathbf{e}_k^T\Sigma\mathbf{e}_k \mathbf1^T \\
&=& \Sigma -\mathbf1 \left( \frac{1}{d} \sum_{k=1}^d \mathbf{e}_k\right)^T\Sigma - \Sigma \left( \frac{1}{d} \sum_{k=1}^d \mathbf{e}_k\right) \mathbf1^T+   \mathbf1 \left(\frac{1}{d} \sum_{k=1}^d\mathbf{e}_k^T\Sigma\mathbf{e}_k\right) \mathbf1^T \\
&=& \Sigma -\frac{\mathbf1\mathbf1^T}{d}\Sigma - \Sigma \frac{\mathbf1\mathbf1^T}{d} +   \mathbf1 \left(\frac{1}{d} tr(\Sigma)\right) \mathbf1^T \\
&=& \Sigma + \frac{1}{d} tr(\Sigma) \cdot \mathbf1\mathbf1^T.
\end{eqnarray*} 
Therefore
$$
	\Sigma = \frac{1}{d} \sum_{k=1}^d \tilde\Sigma^{(k)} - \frac{1}{d} tr(\Sigma) \cdot \mathbf1\mathbf1^T.
$$

\end{proof}


\begin{proof}[Proof of Proposition~\ref{prop:sigma:theta}]

From $\Sigma\cdot \mathbf1 = \mathbf0$, we have that $\frac{1}{\sqrt{d}}\cdot\mathbf1$ is an eigenvector of $\Sigma$ with eigenvalue $0$.  Since $\Sigma$ is positive semidefinite with rank $d-1$. it follows that the rest of the eigenvalues $\lambda_2\le \cdots \le \lambda_d$ must be positive.

Next, we will show that
$$
	\left(\Sigma + \frac{M}{d}\mathbf1\mathbf1^T\right) \cdot \left( \Theta + \frac{1}{dM} \mathbf1\mathbf1^T\right) = I.
$$
Note that
$$
	\tilde\Sigma^{(k)} \cdot \Theta 
	=
	 I - \mathbf{e}_k \mathbf1^T.
$$
(Consider the example of $k=d$, where
$$
	\tilde\Sigma^{(d)} \cdot \Theta =
	\begin{pmatrix}
	\Sigma^{(d)} & \mathbf0 \\
	\mathbf0^T & 0
	\end{pmatrix}
	\cdot
	\begin{pmatrix}
	\Theta^{(d)} & \Theta^{(d)}\mathbf1 \\
	\mathbf1^T \Theta^{(d)} & \mathbf1^T \Theta^{(d)}\mathbf1
	\end{pmatrix}
	=
	 I - \mathbf{e}_d \mathbf1^T.
$$
Other cases of $k$ can be generalized accordingly.)  Then from Proposition~\ref{prop:sigma:sigma_k},
\begin{eqnarray*}
	\Sigma \cdot \Theta &=& \left(\frac{1}{d} \sum_{k=1}^d \tilde\Sigma^{(k)} + \frac{1}{d} tr(\Sigma) \cdot  \mathbf1\mathbf1^T \right) \cdot \Theta \\
	&=& \frac{1}{d} \sum_{k=1}^d \tilde\Sigma^{(k)} \cdot \Theta + \frac{1}{d} tr(\Sigma) \cdot  \mathbf1\mathbf1^T \cdot \Theta \\
	&=& \frac{1}{d} \sum_{k=1}^d \left(I - \mathbf{e}_k \mathbf1^T\right) \\
	&=& I - \frac{1}{d} \mathbf1\mathbf1^T.
\end{eqnarray*}
Therefore
$$
	\left(\Sigma + \frac{M}{d}\mathbf1\mathbf1^T\right) \cdot \left( \Theta + \frac{1}{dM} \mathbf1\mathbf1^T\right) = \Sigma \cdot \Theta + \frac{M}{d}\mathbf1\mathbf1^T \cdot \frac{1}{dM} \mathbf1\mathbf1^T = I - \frac{1}{d} \mathbf1\mathbf1^T +  \frac{1}{d} \mathbf1\mathbf1^T = I.
$$

Now $\Sigma + \frac{M}{d}\mathbf1\mathbf1^T$ is positive definite and has the eigendecomposition:
$$
	\Sigma + \frac{M}{d}\mathbf1\mathbf1^T = \sum_{k=2}^d \lambda_k \mathbf{u}_k\mathbf{u}_k^T + M \cdot \mathbf{u}_1\mathbf{u}_1^T.
$$
Therefore its inverse as the eigendecomposition:
$$
	\Theta + \frac{1}{dM} \mathbf1\mathbf1^T = \sum_{k=2}^d \frac{1}{\lambda_k} \mathbf{u}_k\mathbf{u}_k^T + \frac{1}{M} \cdot \mathbf{u}_1\mathbf{u}_1^T.
$$
It follows that
$$
	\Theta = \sum_{k=2}^d \frac{1}{\lambda_k} \mathbf{u}_k\mathbf{u}_k^T.
$$
And the proposition is proved.

\end{proof}


\begin{proof}[Proof of Proposition~\ref{prop:mle}]

We proceed to show two results,
\begin{equation} \label{eq:mle:eq:1}
	\frac{1}{d}\sum_{k=1}^d \log|\Theta^{(k)}| = \log|\Theta|_+ + \log(d),
\end{equation}
and
\begin{equation} \label{eq:mle:eq:2}
	\frac{1}{d}\sum_{k=1}^d tr\left(\hat{\Sigma}^{(k)} \Theta^{(k)}\right) = tr\left(S \Theta\right).
\end{equation}

First consider \eqref{eq:mle:eq:1}.  
Consider $d=k$.  We have the following transformation
\begin{eqnarray*}
	&&
	\left(
		\begin{array}{cc}
			I& {\bf0} \\
			-\mathbf{1}^T & 1
		\end{array}
	\right)
	\left(
		\begin{array}{cc}
			I& \frac{1}{d}\mathbf{1} \\
			{\bf0}^T & 1
		\end{array}
	\right)
	\left(
		\begin{array}{cc}
			\Theta^{(d)} & {\bf0} \\
			{\bf0}^T & d
		\end{array}
	\right)
	\left(
		\begin{array}{cc}
			I& {\bf0} \\
			\frac{1}{d}\mathbf{1}^T & 1
		\end{array}
	\right)
	\left(
		\begin{array}{cc}
			I& -\mathbf{1} \\
			{\bf0}^T & 1
		\end{array}
	\right) \\
	&=&
	\left(
		\begin{array}{cc}
			I& {\bf0} \\
			-\mathbf{1}^T & 1
		\end{array}
	\right)
	\left(
		\begin{array}{cc}
			\Theta^{(d)} + \frac{1}{d}\mathbf{1}\mathbf{1}^T & {\bf1} \\
			{\bf1}^T & d
		\end{array}
	\right)
	\left(
		\begin{array}{cc}
			I& -\mathbf{1} \\
			{\bf0}^T & 1
		\end{array}
	\right) \\
	&=&
	\left(
		\begin{array}{cc}
			\Theta^{(d)} + \frac{1}{d}\mathbf{1}\mathbf{1}^T & -\Theta^{(d)}\mathbf{1} + \frac{1}{d}{\bf1} \\
			-\mathbf{1}^T\Theta^{(d)} + \frac{1}{d}{\bf1}^T & \mathbf{1}^T\Theta^{(d)}\mathbf{1} + \frac{1}{d}
		\end{array}
	\right) \\
	&=& \Theta + \frac{1}{d}\mathbf{1}\mathbf{1}^T 
\end{eqnarray*}
Since
$$
	\left|
	\left(
		\begin{array}{cc}
			I& {\bf0} \\
			-\mathbf{1}^T & 1
		\end{array}
	\right)
	\right|
	=
	\left|
	\left(
		\begin{array}{cc}
			I& {\bf0} \\
			\frac{1}{d}\mathbf{1}^T & 1
		\end{array}
	\right)
	\right|
	=1,
$$
we have
$$
	|\Theta|_+ = \left| \Theta + \frac{1}{d}\mathbf{1}\mathbf{1}^T\right| = 
	\left|\left(
	\begin{array}{cc}
	\Theta^{(d)} & {\bf0} \\
	{\bf0}^T & d
	\end{array}
	\right)\right|
	= d |\Theta^{(d)}|.
$$
Similarly, we can show that for any $k$,
$$
	|\Theta|_+ = d |\Theta^{(k)}|.
$$
This leads to \eqref{eq:mle:eq:1}.

Now consider \eqref{eq:mle:eq:2}.  We have
\begin{eqnarray*}
\frac{1}{d}\sum_{k=1}^d tr\left({\hat\Sigma}^{(k)}\Theta^{(k)}\right) &=&  tr\left(\left( \frac{1}{d}\sum_{k=1}^d\tilde{\hat\Sigma}^{(k)}\right) \Theta\right) \\
&=&  tr\left(\left( \frac{1}{d}\sum_{k=1}^d\tilde{\hat\Sigma}^{(k)} - \left(\frac{1}{d^3}\sum_{k=1}^d {\mathbf1^T\tilde{\hat\Sigma}^{(k)}\mathbf1}\right)\mathbf1\mathbf1^T\right) \Theta \right) \\
&=& tr(S\Theta).
\end{eqnarray*}

And the proposition is proved.

\end{proof}


\section{Proof of Theorem~\ref{thm:main}} \label{appendix: proof for Section 4}


The proof of Theorem~\ref{thm:main} follows the same strategy as that of Theorem~1 of \cite{ravikumar2011high}. The key difference arises from the shifted penalty target $c>0$ in the extreme graphical lasso, which affects the construction of the witness solution and requires a modified analysis of the continuity of the mapping $F$. We highlight these differences as they arise in the proof.

\begin{proof}[Proof of Theorem~\ref{thm:main}]
The proof is constructed by adapting  the proof of Theorem 1 in \cite{ravikumar2011high} to our setting: shrinking the off-diagonal elements to a given constant $c$. We point out the differences whenever needed.

We shall focus on the event
$$A=\{\|S-\Sigma\|_\infty< \delta_n\},$$.

Recall that $c = \frac{1}{dM}$,
$$
	\Theta^* = \Theta + c \mathbf1\mathbf1^T,
$$
and  $\hat\Theta^*$ is the solution to the following graphical lasso problem
$$
	\hat{\Theta}^* := {\arg\min}_{\Theta^*} \left\{-\log|\Theta^*| + tr(S^*\Theta^*) + \gamma_n \sum_{i\ne j} |\Theta^*_{ij} - c|\right\}
$$
and $\hat{\Theta}_{lasso} := \hat{\Theta}^* - c\mathbf1\mathbf1^T$.   The estimated edge set is $\hat E:=\{(i,j):\hat\Theta_{lasso,ij} \neq 0\}.$ 
We start by proving that on the even $A$ such a solution exists, is unique and satisfies that, $\hat E \subset E,$ and
$$
	\|\hat{\Theta}_{lasso} - \Theta\|_\infty \le r_n,
$$
which is equivalent to proving
$$
	\|\hat{\Theta}^* - \Theta^*\|_\infty \le r_n.
$$


We first show that the solution $\hat\Theta^*$ exsits and is unique. The proof follows the same lines as that for Lemma 3 in \cite{ravikumar2011high}. The estimator $S^*$ is positive definite with all diagonal elements being positive. The rest of the proof follows exactly the same arguments therein.

Next, the solution $\hat\Theta^*$ must satisfy the following KKT condition.
\begin{equation} \label{eq:KKT for Theta}
	-\left(\hat\Theta^*\right)^{-1} + S^* + \gamma_n\hat{Z} = 0,
\end{equation}
where
$$
	\hat Z_{ij} = \begin{cases}
		0 & \text{  if } i=j,\\
		\text{sign}(\hat\Theta^*_{ij}- c) & \text{  if } i\neq j \text{ and } \hat\Theta^*_{ij} \ne c,\\
		\in[-1,1] &\text{  if } i\neq j \text{ and } \hat\Theta^*_{ij}= c.
	\end{cases}
$$

In the following proof, we follow the same idea as in \cite{ravikumar2011high}, but adapting that to this modified KKT condition.
We shall construct a ``witness'' precision matrix $\tilde{\Theta}^*$ as follows. Let $\tilde\Theta^*$ be the solution to the following optimization problem,
\begin{equation} \label{eq: problem tilde Theta}
	\tilde{\Theta}^* := {\arg\min}_{\{\Theta^*:\Theta^*_{ij} = c, (i,j) \in E^c\}} -\log|\Theta^*| + tr(S^*\Theta^*) + \gamma_n \sum_{i\ne j} |\Theta^*_{ij} - c|.
\end{equation}
This is the same optimization but constrained to a smaller domain.  Let $\tilde{E}$ denote the graph recovered from $\tilde\Theta^*$. Clearly, $\tilde{\Theta}^*$ satisfies: $\tilde{\Theta}^*_{ij} = c$ for $(i,j) \in E^c$, i.e. $\tilde{E} \subset E$.

We shall show that under the conditions in Theorem \ref{thm:main},
\begin{itemize}
\item
	$\tilde{\Theta}^*$ satisfies the above KKT condition;
\item
	$\|\tilde{\Theta}^* - \Theta^*\|_\infty \le r_n$.
\end{itemize}
Then by uniqueness, $\tilde{\Theta}^* = \hat\Theta^*$ which achieves the goal that we are aiming to prove.


With a similar argument regarding the existence and uniqueness of the original optimization problem, the solution to the problem \eqref{eq: problem tilde Theta} also exists and is unique. In addition, it satisfies a similar KKT condition as follows,
$$
	-\left(\tilde\Theta^*\right)^{-1}_{ij} + S^*_{ij} + \gamma_n\tilde{Z}_{ij} = 0, \quad (i,j) \in E,
$$
where
$$
	\tilde Z_{ij} = \begin{cases}
		\text{sign}(\tilde\Theta^*_{ij}- c) & \text{if } \tilde\Theta^*_{ij} \ne c,\\
		\in[-1,1] &\text{if } \tilde\Theta^*_{ij}= c.
	\end{cases}
$$
This coincides with the KKT condition \eqref{eq:KKT for Theta}, but the restrictions are only on entries indexed by $E$.  As a matter of fact, $\tilde{Z}$ is only defined on $E$.  In order to argue $\tilde{\Theta}^*$ as a candidate for $\hat\Theta^*$ and satisfies the full KKT condition, we will now extend the definition of $\tilde{Z}$ to $E^c$ as well.

Define
$$
	\tilde{Z}_{ij} := \frac{1}{\gamma_n} \left(\left(\tilde\Theta^*\right)^{-1}_{ij} - S^*_{ij}\right), \quad (i,j) \notin E.
$$
Then the pair $(\tilde\Theta^*,\tilde{Z})$ satisfies the original KKT equation \eqref{eq:KKT for Theta}.  What remains to be proved is that $\tilde{Z}$ also satisfies
$$
	|\tilde{Z}_{ij}| \le 1, \quad (i,j) \notin E.
$$

To summarize, in order to complete the proof of Theorem \ref{thm:main}, we will show that on the event $A$,\\
{\bf Goal 1}: $|\tilde{Z}_{ij}| \le 1, \quad (i,j) \notin E.$\\
{\bf Goal 2}: $\|\tilde{\Theta}^* - \Theta^*\|_\infty \le r_n$.

In the rest of the proof we denote
$$
	\Delta := \tilde\Theta^* - \Theta^*.
$$
For $(i,j) \notin E$, $\tilde\Theta_{ij} = c$ by definition and $\Theta_{ij} = c$.  Therefore, $\Delta_{E^c} = \mathbf0$ and {\bf Goal 2} above can be translated to
$$
	\|\Delta_E\| _\infty \le r_n.
$$	

To handle the KKT condition for $\tilde\Theta^*$, we start with handling $\left(\tilde\Theta^*\right)^{-1}$ as follows:
\begin{eqnarray*}
	\left(\tilde\Theta^*\right)^{-1} &=& \left(\Theta^* + \Delta\right)^{-1} \\
	&=&  \left(\Theta^*(I  + \Sigma^*\Delta)\right)^{-1} \\
	&=& (I  + \Sigma^*\Delta)^{-1} \Sigma^* =: J\Sigma^*.
\end{eqnarray*}
Now in the case where $|||\Sigma^*\Delta|||_\infty < 1$, we can expand $J$ as
$$
	J = \sum_{k=0}^\infty (-1)^k(\Sigma^*\Delta)^k = I - \Sigma^*\Delta + (\Sigma^*\Delta)^2J.
$$
Inspired from this relation, we can use $\Sigma^* - \Sigma^*\Delta\Sigma^*$ to approximate $\left(\tilde\Theta^*\right)^{-1}$ and define
$$
	R :=\left(\tilde\Theta^*\right)^{-1} - (\Sigma^* - \Sigma^*\Delta\Sigma^*),
$$
as the approximation error. Here $R$ is defined regardless of whether $|||\Sigma^*\Delta|||_\infty < 1$.

Recall that $\Sigma^* = \Sigma + \frac{M}{d}\mathbf1\mathbf1^T$ and $S^* = S + \frac{M}{d}\mathbf1\mathbf1^T$. Define
$$
	 R':=S^*-\Sigma^*=S-\Sigma.
$$
On the event $A$, we have that $\|R'\|_\infty \le \delta_n$.  

Rewrite the KKT condition as
$$
	\Sigma^*\Delta\Sigma^* - R + R' + \gamma_n\tilde{Z} = 0.
$$
We vectorize it using the notation $\bar{\cdot}$ as the vectorization of a matrix.  Then the vectorized KKT condition is 
$$
	\overline{\Sigma^*\Delta\Sigma^*} - \overline R + \overline{R'} + \gamma_n\overline{\tilde{Z}} = 0.
$$
Note that
$$
	\overline{\Sigma^*\Delta\Sigma^*} = (\Sigma^*\otimes\Sigma^*)\overline\Delta =: \Omega\overline\Delta,
$$
where $\Omega:= \Sigma^*\otimes\Sigma^*$ denotes the Kronecker product of $\Sigma^*$ with itself.  Then we have
$$
	\Omega\overline\Delta - \overline R + \overline{R'} + \gamma_n\overline{\tilde{Z}} = 0.
$$

By examining the rows of $\Omega$ indexed by $E$ and $E^c$ separately and noting that $\Delta_{E^c}=0$, we get
\begin{eqnarray}
	\Omega_{EE}\overline\Delta_E - \overline R_E + \overline{R'}_E + \gamma_n\overline{\tilde{Z}}_E = 0, \label{eq: first equation with E rows}\\
	\Omega_{E^cE}\overline\Delta_E - \overline R_{E^c} + \overline{R'}_{E^c} + \gamma_n\overline{\tilde{Z}}_{E^c} = 0. \label{eq: second equation with Ec rows}
\end{eqnarray}


\subsection*{Proof of Goal 2}

To prove Goal 2, we shall show that for any pre-specified $\|\tilde{Z}_E\|_\infty \le 1$, there exists a solution to $\Delta$ which satisfies: 
\begin{itemize}
\item
	$-\left(\Theta^* + \Delta\right)^{-1}_E + S^*_E+ \gamma_n\tilde{Z}_E = 0$;
\item
	$\Delta_{E^c} = \mathbf0$;
\item
	$\|\Delta_E\|_\infty \le r_n$.
\end{itemize}

With the statement above proven, given the $\tilde{Z}_E$ produced from the KKT condition for $\tilde\Theta^*$, a solution $\Delta$ exists and coincides with $\tilde Z_E$. Then this is the unique solution.  Hence $\|\tilde\Theta^* - \Theta^*\|_\infty \le r_n$ which concludes {\bf Goal 2}.

Now we construct such a solution $\Delta$.  Recall that $\Delta_{E^c} = \mathbf0$, we only need to construct a suitable $\Delta_E$ by utilizing the Brouwer fixed point theorem.  

The solution $\Delta_E$ satisfies \eqref{eq: first equation with E rows}, which can be rewritten as
$$
	\overline\Delta_E  = (\Omega_{EE}) ^{-1} \left(\overline R_E - \overline{R'}_E - \gamma_n\overline{\tilde{Z}}_E\right).
$$
We regard
$$
	R =\left(\tilde\Theta^*\right)^{-1} - (\Sigma^* - \Sigma^*\Delta\Sigma^*)
$$
as a function of $\Delta$ or eventually a function of $\overline\Delta_E$.  To stress this point we define it as $R = R(\overline\Delta_E)$. Also recall that $R' = S^* - \Sigma^*$ does not depend on $\overline\Delta_E$.  Then we can write the above equation as 
$$
	\overline\Delta_E  = (\Omega_{EE})^{-1} \left(\overline R_E(\overline\Delta_E) - \overline{R'}_E - \gamma_n\overline{\tilde{Z}}_E\right): = F(\overline\Delta_{E}).
$$

Consider the closed ball $B(r_n) ;= \{x \in \mathbb{R}^{|E|}: \|x\|_\infty \le r_n\}$.   If $F$ is a continuous mapping from $B(r_n)$ onto itself, then there exists a fixed point $\overline\Delta_E$ on $B(r_n)$ such that $\overline\Delta_E = F(\overline\Delta_{E})$ following the Brouwer fixed point theorem. This is exactly the desired solution. 

In the proof of Theorem 1 in \cite{ravikumar2011high}, a similar mapping $F$ was defined and claimed to be ``clearly continuous''. In fact, the continuity requires more conditions. Here we provide a careful argument for the continuity.

While most terms in the definition of $F$ are continuous functions, the term $\overline{\tilde{Z}}_E$ is related to $\overline{\Delta}_E$ by $\tilde{Z}_E=\text{sign}(\tilde{\Theta}^*_E-c)=\text{sign}(\Delta_E+\Theta^*_E-c)$,  which is a potentially discontinuous function, due to the discontinuity of the sign function. Nevertheless, with assuming that $\min\{\|\Theta_{ij}\|; (i,j) \in E, i\neq j\}>r_n$, since $\|\Delta_E\|\leq r_n$, we have $\tilde{Z}_E=\text{sign}(\Delta_E+\Theta^*_E-c)=\text{sign}(\Theta_E)$, which is not related to $\Delta_E$. Then $F$ defined above is a continuous mapping.

Next, we need to show that $F$ projects $B(r_n)$ onto itself, that is, for any $\overline\Delta_E$ satisfying $\|\overline\Delta_E\|_\infty \le r_n$, we have $\|F(\overline\Delta_E)\|_\infty \le r_n$. For any $\|\overline\Delta_E\|_\infty \le r_n$, we write
$$
	\|F(\overline\Delta_E)\|_\infty \le  |||(\Omega_{EE})^{-1}|||_\infty \left(  \|R\|_\infty + \|R'\|_\infty + \gamma_n \|\overline{\tilde{Z}}_E\|_\infty \right)\le  |||(\Omega_{EE}) ^{-1}|||_\infty \left(  \|R\|_\infty + \|R'\|_\infty + \gamma_n \right),
$$
due to the fact that $\|\overline{\tilde{Z}}_E\|_\infty \le 1$

We first handle $\|R\|_\infty$. Recall that 
$$
	R :=\left(\tilde\Theta^*\right)^{-1} - (\Sigma^* - \Sigma^*\Delta\Sigma^*)= \sum_{k=2}^\infty (-1)^k(\Sigma^*\Delta)^k \Sigma^*  = (\Sigma^*\Delta)^2J\Sigma^*,
$$
where 
$$
	J = (I + \Sigma^* \Delta)^{-1} = \sum_{k=0}^\infty (-1)^k(\Sigma^*\Delta)^k,
$$
provided that
$||| \Sigma^* \Delta|||_\infty < 1 $. To ensure this condition, 
$$
|||\Sigma^*\Delta|||_\infty = |||\Sigma^* (\tilde\Theta - \Theta)|||_\infty \le D|||\Sigma^*|||_\infty\cdot r_n,
$$
where $D$ is the maximum degree in the graph. Therefore $||| \Sigma^* \Delta|||_\infty < 1$ holds by  requiring that
\begin{equation} \label{eq: required condition 1}
	D|||\Sigma^*|||_\infty\cdot r_n \le C_4< 1.
\end{equation}
The upper bound $C_4$ implies that $
	|||J^T|||_\infty \leq \frac{1}{1-C_4}.$

With the condition \eqref{eq: required condition 1}, we can further derive an upper bound for $\|R\|_\infty$. Consider one specific element in $R$. With denoting  $\mathbf{e}_i$ as a vector with all zero elements except a one element at the $i-$th dimension, we have that
$$R_{ij}=\mathbf{e}_i^T (\Sigma^*\Delta)^2J\Sigma^* \mathbf{e}_j\leq \|\mathbf{e}_i^T(\Sigma^*\Delta)^2\|_{\infty}\| J\Sigma^* \mathbf{e}_j\|_{1}\leq \|(\Sigma^*\Delta)^2\|_\infty|||\Sigma^* J^T|||_\infty .$$
By considering all possible $(i,j)$ we get that,
\begin{eqnarray*}
 \|R\|_\infty &\le& \|(\Sigma^*\Delta)^2\|_\infty|||\Sigma^* J^T|||_\infty\\
 &\le& |||\Sigma^*\Delta|||_\infty \|\Sigma^*\Delta\|_\infty|||J^T|||_\infty  |||\Sigma^*|||_\infty\\
&\le& D|||\Sigma^*|||_\infty\cdot r_n \cdot |||\Sigma^*|||_\infty r_n \cdot \frac{1}{1-C_4}\cdot |||\Sigma^*|||_\infty\\
&<& C_5 \cdot r_n^2,
\end{eqnarray*}
where $C_5=\frac{D}{1-C_4}|||\Sigma^*|||_\infty^3 $.

Next, since $
	R' = S^* - \Sigma^* = S - \Sigma,$ we have that on the event $A$, $
	\|R'\|_\infty \le \delta_n.
$ Combining the upper bounds for $\|R\|_\infty$ and $\|R'\|_\infty$, we get that,
$$
	\|F(\Delta_E)\|_\infty  \le  |||(\Omega_{EE})^{-1}|||_\infty ( C_5 \cdot r^2_n + \delta_n + \gamma_n) \le r_n,
$$
by requiring that
\begin{equation} \label{eq: required condition 2}
	C_5 \cdot r_n^2 + \delta_n + \gamma_n \le \frac{1}{|||(\Omega_{EE})^{-1}|||_\infty} \cdot r_n
\end{equation}

If the two required conditions \eqref{eq: required condition 1} and \eqref{eq: required condition 2} hold, we achieve \textbf{Goal 2} by utilizing the Brouwer fixed point theorem.

\subsection*{Proof of Goal 1}

To prove \textbf{Goal 1}, we shall show that with the constructed solution above, we have
$$
	\|\overline{\tilde{Z}}_{E^c}\|_\infty \le 1.
$$

We rewrite the equation \eqref{eq: second equation with Ec rows} as
$$
	\overline{\tilde{Z}}_{E^c} = -\frac{1}{\gamma_n}\Omega_{EE^c}\overline\Delta_E + \frac{1}{\gamma_n} \overline R_{E^c} -\frac{1}{\gamma_n} \overline{R'}_{E^c},
$$
and the substitute  $\overline\Delta_E$ above using \eqref{eq: first equation with E rows} to get that 
$$
	\overline{\tilde{Z}}_{E^c} = -\frac{1}{\gamma_n}\Omega_{EE^c}(\Omega_{EE})^{-1} \left(- \overline R_E + \overline{R'}_E + \gamma_n\overline{\tilde{Z}}_E\right) + \frac{1}{\gamma_n} \overline R_{E^c} -\frac{1}{\gamma_n} \overline{R'}_{E^c}.
$$
The upper bound for $\|\overline{\tilde{Z}}_{E^c}\|_\infty $ is then  
$$
	\|\overline{\tilde{Z}}_{E^c}\|_\infty \le \frac{1}{\gamma_n}|||\Omega_{EE^c}(\Omega_{EE})^{-1}|||_\infty \left( \|R\|_\infty + \|R'\|_\infty +\gamma_n\|\overline{\tilde{Z}}_E\|_\infty \right) + \frac{1}{\gamma_n}\|R\|_\infty + \frac{1}{\gamma_n}\|R'\|_\infty.
$$
Using the same upper bounds derived in the proof of \textbf{Goal 2}, we have
\begin{eqnarray*}
	\|\overline{\tilde{Z}}_{E^c}\|_\infty &\le& \frac{1}{\gamma_n}|||\Omega_{EE^c}(\Omega_{EE})^{-1}|||_\infty \left( \delta_n + C_5 \cdot r_n^2 +\gamma_n\right) +  \frac{1}{\gamma_n}(\delta_n + C_5 \cdot r_n^2)\\
	&=&|||\Omega_{EE^c}(\Omega_{EE})^{-1}|||_\infty+\frac{1}{\gamma_n}\left(|||\Omega_{EE^c}(\Omega_{EE})^{-1}|||_\infty+1\right)\left( \delta_n + C_5 \cdot r_n^2\right).
\end{eqnarray*}
Since Condition \eqref{con:mi} ensures that $|||\Omega_{EE^c}(\Omega_{EE})^{-1}|||_\infty<1-\alpha$, 
To satisfy $\|\overline{\tilde{Z}}_{E^c}\|_\infty \le 1$, we only need to further require
\begin{equation} \label{eq: required condition 3}
	\delta_n + C_5 \cdot r_n^2 \le \frac{\alpha}{1-\alpha} \gamma_n.	
\end{equation}

To conclude, the theorem is proven provided that the three conditions \eqref{eq: required condition 1}--\eqref{eq: required condition 3} hold. The last step is to verify these three relations under the conditions in Theorem \ref{thm:main}.

Recall that $r_n$ is defined in \eqref{eq:definition r}
$$r_n:=\frac{|||\left(\Omega_{EE} \right)^{-1}|||_\infty}{1-\alpha} \cdot \gamma_n.$$
Clearly, this definition together with  \eqref{eq: required condition 3} implies  \eqref{eq: required condition 2}. Hence we only need to verify the conditions \eqref{eq: required condition 1} and  \eqref{eq: required condition 3}.

We write the two conditions in terms of  $\delta_n$ and $\gamma_n$:
\begin{eqnarray*}
	D|||\Sigma^*|||_\infty \frac{|||(\Omega_{EE})^{-1}|||_\infty}{1-\alpha} \cdot \gamma_n  &\le&C_4 \\
	\delta_n + \frac{D}{1-C_4}|||\Sigma^*|||_\infty^3 \cdot \left(\frac{|||(\Omega_{EE})^{-1}|||_\infty}{1-\alpha} \right)^2\cdot \gamma_n^2 & \le& \frac{\alpha}{1-\alpha} \cdot \gamma_n.
\end{eqnarray*}
where $C_5$ is substituted by $ \frac{D}{1-C_4}|||\Sigma^*|||_\infty^3$:

The lower bound for $\gamma_n$ in \eqref{eq:lowerbound rho} ensures that $\delta_n\leq \epsilon \frac{\alpha}{1-\alpha} \gamma_n$ for some $0<\epsilon<1$, we thus need to require that
$$
\frac{D}{1-C_4}|||\Sigma^*|||_\infty^3 \cdot \left(\frac{|||(\Omega_{EE})^{-1}|||_\infty}{1-\alpha} \right)^2\cdot \gamma_n\leq (1-\epsilon) \frac{\alpha}{1-\alpha},
$$
which guarantees the second condition. Together with the first condition, we have obtained an upper bound for $\gamma_n$ as
$$\gamma_n\leq \min\left\{\frac{C_4(1-\alpha)}{D|||\Sigma^*|||_\infty |||(\Omega_{EE})^{-1}|||_\infty},\frac{(1-C_4)(1-\epsilon)\alpha(1-\alpha)}{D|||\Sigma^*|||_\infty^3|||(\Omega_{EE})^{-1}|||_\infty^2}\right\}.$$
We choose $C_4$ such that the two terms in the minimum are equal. That is
$$C_4=\frac{(1-\epsilon)\alpha}{(1-\epsilon)\alpha+|||\Sigma^*|||_\infty^2|||(\Omega_{EE})^{-1}|||_\infty}<1,$$
which leads to
$$\gamma_n\leq \frac{(1-\epsilon)\alpha(1-\alpha)}{D|||\Sigma^*|||_\infty |||(\Omega_{EE})^{-1}|||_\infty\left[(1-\epsilon)\alpha+|||\Sigma^*|||_\infty^2|||(\Omega_{EE})^{-1}|||_\infty\right]}.$$
This is exactly the required upper bound for $\gamma_n$ in \eqref{eq:upperbound rho}.


\end{proof}


\section{Proof of Proposition~\ref{prop:accuracy of S estimation}} \label{appendix:proof of S estimation}

\begin{proof}[Proof of Proposition \ref{prop:accuracy of S estimation}] We intend to apply Theorem 3 in \cite{engelke2021learning}. For that purpose, we first verify all assumptions needed for that theorem, namely Assumptions 1 and 2 therein.

We handle Assumption 1 first. Based on Lemma S3 in \cite{engelke2021learning}, Condition \ref{assumption:non-degenerate} implies that for any $\xi''>0$, there exists $K_{\xi''}>0$ depending on $\underline{\lambda}$ and $\overline{\lambda}$, but independent of $d$, such that Assumption 4 therein holds. 
Denote $K=K'+2K_{\xi''}$ and $\xi=\xi' \xi''/(1+\xi'+\xi'')$. Together with Condition \ref{assumption:second order}, we get that Assumption 1 in \cite{engelke2021learning} holds. In particular, one can choose $\xi''$ sufficiently large such that $\xi$ can be any constant satisfying $\xi<\xi'$. 
	
Next, Assumption 2 in \cite{engelke2021learning} holds automatically for all non-degenerate HR distribution satisfying our Condition \ref{assumption:non-degenerate}. Therefore, we can then apply Theorem 3 therein to obtain that there exists positive constants $C_1$, $C_2$ and $C_3$, independent of $d$, such that for any $k_n\geq n^{\xi}$ and $\lambda\leq \sqrt{k_n}/(\log n)^4$,
	\begin{equation}\label{eq:inequality for sigma_k}
	P\left(\max_{1\leq k\leq d}\|\hat\Sigma^{(k)} -\Sigma^{(k)}\|_\infty> C_1\left\{\left(\frac{k_n}{n}\right)^{\xi}\left(\log\left(\frac{k_n}{n}\right)\right)^2+\frac{1+\lambda}{\sqrt{k_n}}\right\}\right)\leq C_2 d^3e^{-C_3\lambda^2}.
	\end{equation}
	The constant $C_1$ here equals to $\frac{3}{2}\overline{C}$ in Theorem 3 in \cite{engelke2021learning} because we are estimating the matrix $\Sigma$ instead of the variogram $\Gamma$.
	
	For any $\varepsilon\geq C_2 d^3\exp\{-\frac{C_3k_n}{(\log n)^8}\}$, one can choose $\lambda=\sqrt{\frac{1}{C_3}\log (C_2 d^3/\varepsilon)}\leq \frac{\sqrt{k_n}}{(\log n)^4}$ in \eqref{eq:inequality for sigma_k} to obtain the element-wise bound for the estimation error $\hat\Sigma^{(k)} -\Sigma^{(k)}$ uniformly for all $1\leq k\leq d$.
	
	Since 
	$$S- \Sigma = \frac{1}{d}\sum_{k=1}^d(\hat\Sigma^{(k)}-\Sigma^{(k)})-\frac{1}{d}\sum_{k=1}^d\left(\frac{1}{d^2}\mathbf{1}^T(\hat\Sigma^{(k)}-\Sigma^{(k)})\mathbf{1}\right)\mathbf{1}\mathbf{1}^T,$$
	which implies that $\|S-\Sigma\|_\infty\leq 2\max_{1\leq k\leq d}\|\hat\Sigma^{(k)} -\Sigma^{(k)}\|_\infty$,
	we immediately get the inequality \eqref{eq:inequality for sigma} with replacing $C_1$ by $2C_1$. W.l.o.g., we continue using $C_1$.
	
	For the asymptotic statement, if $(\log n)^4\sqrt{\frac{\log d}{k_n}}\to 0$ as $n\to\infty$, then the lower bound for $\varepsilon$, $\varepsilon_n\to 0$ as $n\to\infty$. In other words, for any $\varepsilon>0$, with sufficiently large $n$, $\varepsilon>\varepsilon_n$ holds. The asymptotic statement follows immediately.
	\end{proof}

\section{Proof of Proposition~\ref{prop:mi:upper:bound}} \label{app:proof:mi:upperbound}

To prove Proposition~\ref{prop:mi:upper:bound}, we require the following lemma. 

\begin{lemma} \label{lemma:2}
Let $A \in \mathbb{R}^{p\times p}$ be a symmetric square martix and let $A_{EE}$ be a principal sub-matrix of $A$ obtained by only keeping the columns and rows indexed by $E \subset \{1,\ldots,p\}$.  Then the largest and smallest eigenvalues of $A_{EE}$ can be bounded by
$$
	\lambda_{\min}(A_{EE}) \ge \lambda_{\min} (A), \quad \lambda_{\max}(A_{EE}) \le \lambda_{\max}(A).
$$

\end{lemma}

\begin{proof}[Proof of Lemma~\ref{lemma:2}]
By definition,
$$
    \lambda_{\min}(A_{EE}) = \inf_{\mathbf{x} \in \mathbb{R}^{|E|} }\frac{\mathbf{x}^TA_{EE}\mathbf{x}}{\mathbf{x}^T\mathbf{x}}.
$$
Given $\mathbf{x} \in \mathbb{R}^{|E|}$, let $\tilde{\mathbf{x}}$ denote its augmentation in $\mathbb{R}^p$ such that the entries of$\tilde{\mathbf{x}}$ indexed by $E$ coincide with that of $\mathbf{x}$, and those indexed by $E^c$ are equal to 0.  Then
$$
    \lambda_{\min}(A_{EE}) = \inf_{\mathbf{x} \in \mathbb{R}^{|E|} }\frac{\tilde{\mathbf{x}}^TA\tilde{\mathbf{x}}}{\tilde{\mathbf{x}}^T\tilde{\mathbf{x}}} \ge \inf_{\tilde{\mathbf{x}} \in \mathbb{R}^{p} }\frac{\tilde{\mathbf{x}}^TA\tilde{\mathbf{x}}}{\tilde{\mathbf{x}}^T\tilde{\mathbf{x}}} = \lambda_{\min}(A).
$$
The upper bound for $\lambda_{\max}(A_{EE})$ is proved similarly.
\end{proof} 

\begin{proof}[Proof of Proposition~\ref{prop:mi:upper:bound}]
Observe that
\begin{eqnarray*}
|||\Omega_{E^cE}(\Omega_{EE})^{-1}|||_\infty &=& \sup_{\mathbf{x} \in \mathbb{R}^{|E|}} \frac{\|\Omega_{E^cE}(\Omega_{EE})^{-1}\mathbf{x}\|_\infty}{\|\mathbf{x}\|_\infty} \\
&=& \sup_{\mathbf{y} \in \mathbb{R}^{|E|}} \frac{\|\Omega_{E^cE}\mathbf{y}\|_\infty}{\|\Omega_{EE}\mathbf{y}\|_\infty} \\
&=& \sup_{\|\mathbf{y}\|_2=1} \frac{\|\Omega_{E^cE}\mathbf{y}\|_\infty}{\|\Omega_{EE}\mathbf{y}\|_\infty} \\
&\le& \frac{\sup_{\|\mathbf{y}\|_2=1} \|\Omega_{E^cE}\mathbf{y}\|_\infty}{\inf_{\|\mathbf{y}\|_2=1} \|\Omega_{EE}\mathbf{y}\|_\infty} \\
&\le& \sqrt{|E|} \cdot \frac{\sup_{\|\mathbf{y}\|_2=1} \|\Omega_{E^cE}\mathbf{y}\|_2}{\inf_{\|\mathbf{y}\|_2=1} \|\Omega_{EE}\mathbf{y}\|_2} \\
&\le& \left( |E| \cdot \frac{\sup_{\|\mathbf{y}\|_2=1} \mathbf{y}^T\Omega_{EE^c}\Omega_{E^cE}\mathbf{y}}{\inf_{\|\mathbf{y}\|_2=1} \mathbf{y}^T\Omega_{EE}\Omega_{EE}\mathbf{y}} \right)^{1/2} \\
&\le& \left( |E| \cdot \left (\frac{\sup_{\|\mathbf{y}\|_2=1} \mathbf{y}^T(\Omega_{EE^c}\Omega_{E^cE}+\Omega_{EE}\Omega_{EE})\mathbf{y}}{\inf_{\|\mathbf{y}\|_2=1} \mathbf{y}^T\Omega_{EE}\Omega_{EE}\mathbf{y}} -1 \right)\right)^{1/2} \\
&\le& \left( |E| \cdot \left (\frac{\lambda_{\max}((\Omega^2)_{EE})}{\lambda_{\min}((\Omega_{EE})^2)} -1 \right)\right)^{1/2},
\end{eqnarray*}
where we used the inequality that for any $\mathbf{x} \in \mathbb{R}^{p}$,
$$
	\|\mathbf{x}\|_2 \le \|\mathbf{x}\|_\infty \le \sqrt{p} \|\mathbf{x}\|_2.
$$
From Lemma~\ref{lemma:2}, 
$$
    \lambda_{\max}((\Omega^2)_{EE}) \le \lambda_{\max}(\Omega^2) = \lambda^2_{\max}(\Omega)
$$
and
$$
    \lambda_{\min}((\Omega_{EE})^2) = \lambda_{\min}^2(\Omega_{EE}) \ge \lambda_{\min}^2(\Omega).
$$
Hence the inequality can be further written as
\begin{eqnarray*}
|||\Omega_{E^cE}(\Omega_{EE})^{-1}|||_\infty &\le& \left( |E| \cdot \left (\frac{\lambda_{\max}((\Omega^2)_{EE})}{\lambda_{\min}((\Omega_{EE})^2)} -1 \right)\right)^{1/2} \\
&\le& \left( |E| \cdot \left (\frac{\lambda_{\max}^2(\Omega)}{\lambda_{\min}^2(\Omega)} -1 \right)\right)^{1/2}.
\end{eqnarray*}

It remains to derive the expressions for $\lambda_{\min}^2(\Omega)$ and $\lambda_{\max}^2(\Omega)$.
From Proposition~\ref{prop:sigma:theta}, $\Sigma$ has ordered eigenvalues
$$
	0= \lambda_1\le \lambda_2 \cdots \le \lambda_d,
$$
with corresponding eigenvectors
$$
	\mathbf{u}_1=\frac{1}{\sqrt{d}}\mathbf1, \mathbf{u}_2 \ldots,\mathbf{u}_d.
$$  
Then $\Sigma^* = \Sigma + M \cdot \mathbf{u}_1\mathbf{u}_1^T$ is positive definite and has the same set of eigenvectors with corresponding eigenvalues
$$
	\tilde\lambda_1=M, \lambda_2,\ldots, \lambda_d.
$$

Now consider the eigendecomposition of $\Omega = \Sigma^* \otimes \Sigma^*$. 
For any $i,j$, we have
$$
	\Sigma^* \mathbf{u}_i\mathbf{u}_j^T \Sigma^* = \lambda_i\lambda_j \cdot \mathbf{u}_i\mathbf{u}_j^T,
$$
which implies that
$$
	\Omega \overline{\mathbf{u}_i\mathbf{u}_j^T} = \Sigma^* \otimes \Sigma^* \overline{\mathbf{u}_i\mathbf{u}_j^T} = \overline{\Sigma^* \mathbf{u}_i\mathbf{u}_j^T \Sigma^*} = \lambda_i\lambda_j  \overline{\mathbf{u}_i\mathbf{u}_j^T}.
$$
Hence each $\overline{\mathbf{u}_i\mathbf{u}_j^T}$ is an eigenvector for $\Omega$ with corresponding eigenvalue $\lambda_i\lambda_j$.
It remains to show that $\{\overline{\mathbf{u}_i\mathbf{u}_j^T}\}_{i,j}$ forms an orthonormal basis.  For any $(i,j),(i',j')$,
$$
 \overline{\mathbf{u}_i\mathbf{u}_j^T}^T  \overline{\mathbf{u}_{i'}\mathbf{u}_{j'}^T} =
  tr(\mathbf{u}_i\mathbf{u}_j^T\mathbf{u}_{j'}\mathbf{u}_{i'}^T)\\
=
 \begin{cases}
 0, & j\ne j' \\
 tr(\mathbf{u}_i\mathbf{u}_{i'}^T), & j=j'
 \end{cases}
 =
 \begin{cases}
 0, & j\ne j' \\
 0,& j=j', i\ne i' \\
1& j=j',i=i'
 \end{cases}.
$$
Therefore $\Omega$ has the set of eigenvalues
$$
	\{\gamma_{ij}\}_{i,j}:=\{\lambda_i\lambda_j\}_{i,j},
$$
with eigenvectors
$$
	\left\{\mathbf{v}_{ij}\right\}_{i,j}:=\{\overline{\mathbf{u}_i\mathbf{u}_j^T}\}_{i,j},
$$
In particular, the smallest and the largest eigenvalues correspond to
$$
    \lambda_{\min}(\Omega) = \min_{i,j}\{\lambda_i\lambda_j\} = \min\{M^2,\lambda_2^2\}
$$
and
$$
    \lambda_{\max}(\Omega) = \max_{i,j}\{\lambda_i\lambda_j\} = \max\{M^2,\lambda_d^2\}.
$$
\end{proof}

\section{Block coordinate descent algorithm for the extreme graphical lasso} \label{app:algo}


\subsection{Problem specification}

In this appendix we present the algorithm for solving for $\Theta$ given the input of $S$ and $c>0$ from the following problem.
\begin{equation} \label{eqapp:eglasso}
	\min_{\Theta \succ 0} -\log|\Theta| + tr(S\Theta) + \lambda \sum_{i\ne j} |\theta_{ij}-c|.
\end{equation}
We slightly deviate from the notation from the rest of the paper.  The $\Theta$ here corresponds to the $\Theta^*$ in the main text of the manuscript.  This is to simply the notation in the following algebraic representations.

The classical graphical lasso requires $c=0$ instead of $c>0$.  Similar to $c=0$, for $c>0$, \eqref{eqapp:eglasso} is also a convex optimization.  Therefore a unique solution exists and satisfies the KKT condition:
\begin{equation} \label{eqapp:kkt}
	-\Theta^{-1} + S + \lambda \Gamma=\mathbf0,
\end{equation}
where $\Gamma := \text{sign}(\Theta)$ is the sub-gradient of $\sum_{i\ne j} |\theta_{ij}-c|$ satisfying
$$
	\gamma_{ij}  =
	\begin{cases}
	\text{sign}(\theta_{ij}-c), & \theta_{ij} \ne c;\\
	\in [-1,1], & \theta_{ij}=c.
	\end{cases}
$$

The algorithm presented in this appendix is adapted from the P-GLASSO algorithm proposed by \cite{mazumder2012graphical}.  While the most widely used algorithm for solving the classical graphical lasso is the GLASSO algorithm \citep{fri2008}, based on block coordinate descent, it cannot be directly adapted for the case of $c>0$.  The P-GLASSO algorithm is also based on block coordinate descent but solves for a different target function at each iteration.  It has a similar order of computation as the GLASSO and is designed to avoid the computation incurred by inverting large matrices at each iteration.


\subsection{Block matrix inversion}

In this algorithm we make use of the following result for the inversion of block matrices.  

Let $W$ be the inverse of $\Theta$ such that $W \cdot \Theta = I$, where $$
	\Theta = 
	\begin{pmatrix}
	\Theta_{11} & \boldsymbol\theta_{12} \\
	 \boldsymbol\theta_{12}^T & \theta_{22}
	\end{pmatrix}.
$$
Then
\begin{eqnarray}
W=\left(
	\begin{array}{cc}
	W_{11} & \mathbf{w}_{12} \\
	\mathbf{w}_{12}^T & w_{22}
	\end{array}
	\right)
	&=&
\left(
	\begin{array}{cc}
	(\Theta_{11} -  \theta_{22}^{-1}\boldsymbol\theta_{12}\boldsymbol\theta_{12}^T)^{-1} & -\theta_{22}^{-1}W_{11}\boldsymbol\theta_{12} \\
	\cdot & \theta_{22}^{-1} + \theta_{22}^{-2}\boldsymbol\theta_{12}^TW_{11}\boldsymbol\theta_{12}
	\end{array}
	\right) \label{eqapp:inverse1} \\
	&=&
\left(
	\begin{array}{cc}
	\Theta_{11}^{-1} +\frac{\Theta_{11} ^{-1}\boldsymbol\theta_{12}\boldsymbol\theta_{12}^T\Theta_{11}^{-1} }{\theta_{22} - \boldsymbol\theta_{12}^T\Theta_{11} ^{-1} \boldsymbol\theta_{12}} 
		& -\frac{\Theta_{11} ^{-1}\boldsymbol\theta_{12}}{\theta_{22} - \boldsymbol\theta_{12}^T\Theta_{11} ^{-1} \boldsymbol\theta_{12}} \\
	\cdot 
		& \frac{1}{\theta_{22} - \boldsymbol\theta_{12}^T\Theta_{11} ^{-1} \boldsymbol\theta_{12}},
	\end{array}
	\right) \label{eqapp:inverse2}
\end{eqnarray}
where $\cdot$ denotes the mirroring of elements in the upper triangle. The derivation can be found in Section~\ref{app:algo:proofs}.  The same formula can be applied for the opposite representation of $\Theta$ using $W$.


\subsection{P-GLASSO algorithm}

The algorithm operates by updating a specific row/column of $\Theta$ ($\Theta$ is symmetric) in each iteration while keeping the rest of the row/column fixed.  The operation is repeated while iterating through the rows/columns until convergence.  Write $\Theta$ as
$$
	\Theta = 
	\begin{pmatrix}
	\Theta_{11} & \boldsymbol\theta_{12} \\
	 \boldsymbol\theta_{12}^T & \theta_{22}
	\end{pmatrix}.
$$
Consider the problem of estimating $\boldsymbol\theta_{12}$ and $\theta_{22}$ while keeping $\Theta_{11}$ fixed.

Denote $W$ as the inverse of $\Theta$.  Then the KKT condition can be re-written as
\begin{equation} \label{eqapp:kkt:2}
	-W + S + \lambda \Gamma=\mathbf0.
\end{equation}
First, \eqref{eqapp:kkt:2} provides direct solutions to the diagonal entries of $W$:
$$
	w_{ii} = s_{ii} + \lambda \gamma_{ii} = s_{ii}.
$$
The last column of \eqref{eqapp:kkt:2} gives
\begin{equation} \label{eqapp:kkt:lower}
	-\mathbf{w}_{12} + \mathbf{s}_{12} + \lambda \boldsymbol\gamma_{12} = \mathbf0.
\end{equation}
The lower left entry of the right hand side of \eqref{eqapp:inverse1} gives
$$
	\Theta_{11}\mathbf{w}_{12} + \boldsymbol\theta_{12} \cdot w_{22} = \mathbf0,
$$
which allows us to write $\mathbf{w}_{12}$ as
$$
	\mathbf{w}_{12} = - \Theta_{11}^{-1} \boldsymbol\theta_{12} \cdot w_{22}.
$$
Plugging into \eqref{eqapp:kkt:lower}, we have
$$
	\Theta_{11}^{-1}\cdot  \boldsymbol\theta_{12} w_{22} + \mathbf{s}_{12} + \lambda \cdot \text{sign}(\boldsymbol\theta_{12} - c \mathbf1) = \mathbf0.
$$
Replacing $w_{22}$ with the known solution $w^*_{22} := s_{22} + \lambda$, we have
$$
	\Theta_{11}^{-1}\cdot  \boldsymbol\theta_{12} w^*_{22} + \mathbf{s}_{12} + \lambda \cdot \text{sign}(\boldsymbol\theta_{12} - c \mathbf1) = \mathbf0.
$$
Here we reserve the notation $w_{22}$ for the lower diagonal entry of the current $W$ for reasons to be stated later.
Denote
$$
	\boldsymbol\alpha:= \left(\boldsymbol\theta_{12}- c \mathbf1\right) w^*_{22}.
$$
Then
$$
	\Theta_{11}^{-1}\cdot  \boldsymbol\alpha + \Theta_{11}^{-1}c \mathbf1 \cdot w^*_{22} + \mathbf{s}_{12} + \lambda \cdot \text{sign}( \boldsymbol\alpha) = \mathbf0.
$$
This equation is the KKT condition for the lasso problem
$$
	\min_{\boldsymbol\alpha} \left\{ \frac{1}{2} \boldsymbol\alpha\Theta_{11}^{-1}\boldsymbol\alpha + \left(\Theta_{11}^{-1}c \mathbf1 \cdot w^*_{22}  + \mathbf{s}_{12}\right)^T \boldsymbol\alpha + \lambda \|\boldsymbol\alpha\|_1\right\},
$$
which can be solved using a standard lasso algorithm.
Once $\boldsymbol\alpha$ is solved, $\boldsymbol\theta_{12}$ can be obtained from
$$
	\boldsymbol\theta_{12} = \frac{\boldsymbol\alpha}{w^*_{22}} + c \mathbf1.
$$
Then $\theta_{22}$ can be updated by
\begin{equation} \label{eqapp:update:22}
	\theta_{22} = \frac{1}{w_{22}^*} + \boldsymbol\theta_{12}^T \Theta_{11}^{-1} \boldsymbol\theta_{12},
\end{equation}
which follows from the matrix operation of inverses, see \eqref{eqapp:inverse2}.  

In principle, we can repeat the above procedure to iteratively update each row/column of $\Theta$.  The estimated $\Theta$ will guarantee to converge from the convergence of block gradient descent, see for example Proposition~2.7.1 of \cite{bertsekas1997nonlinear}. However, there is one extra consideration.  We would like to avoid inverting large matrices, specifically in calculating $\Theta_{11}^{-1}$.  Therefore we require an extra step for storing the updates in each iteration that proceeds as follows.

We propose to store all our updates from each iteration into the matrix $W$ and pass it on to the next iteration:
\begin{itemize}
\item
	At the beginning of an iteration, given $W$, we can calculate $\Theta_{11}^{-1}$ from
	$$
		\Theta_{11}^{-1} = W_{11} - \mathbf{w}_{12}\mathbf{w}_{12}^T/w_{22}.
	$$
	This follows from \eqref{eqapp:inverse2}.  Here $w_{22}$, the current lower diagonal entry of $W$, is used instead of $w^*_{22}$, in order to ensure the positive definiteness of $\Theta_{11}^{-1}$.
\item
	At the end of an iteration, we can calculate the updated $W$ from $\Theta_{11}^{-1}$, $\boldsymbol\theta_{12}$ and $\theta_{22}$ using the formula \eqref{eqapp:inverse2}.  The updated $W$ is then carried on to the next iteration.
\end{itemize}
In both operations, only matrix multiplication is required, minimizing the computation incurred.

In the following we show that for each iteration, if the input $W$ is positive definite, then the updated $W$ will also be positive definite, hence ensuring that $\Theta$ is positive definite throughout the iterations.

Assume that input $W$ is positive definite.  Then $\Theta_{11}$, which is fixed in the current iteration, is also positive definite.  From the Schur complement lemma, the necessary and sufficient condition for the updated $\Theta$ to be positive definite is
$$
	\theta_{22} - \boldsymbol\theta_{12}^T \Theta_{11}^{-1} \boldsymbol\theta_{12} > 0.
$$
This is readily satisfied, as by the updating rule \eqref{eqapp:update:22},
$$
	\theta_{22} - \boldsymbol\theta_{12}^T \Theta_{11}^{-1} \boldsymbol\theta_{12} = \frac{1}{w^*_{22}} >0.
$$

%


\subsection{Pseudo code: P-GLASSO for extremes}

\begin{itemize}
\item
Input: $S$, $\lambda$ and $c$. 
\item
Initiate $W=S$.
(Or Initiate $W = diag(S)$ and $\Theta = W^{-1}$.)
\item
Iterate until convergence.  Input $W$.  Output updated $W$.
\begin{itemize}
\item
	Permute the rows/columns of $W$ such that the target row/column is at the end of the matrix (i.e.~$\mathbf{w}_{12}$).
\item
	Compute $\Theta_{11}^{-1}$ from input $W$ via
	$$
		\Theta_{11}^{-1} = W_{11} - \mathbf{w}_{12}\mathbf{w}_{12}^T/w_{22}.
	$$
\item
	Let $\boldsymbol\alpha:= (\boldsymbol\theta_{12} - c\mathbf1) w^*_{22}$, where $w^*_{22}:=s_{22}$.  Solve for $\boldsymbol\alpha$ via
	$$
		\boldsymbol\alpha =   {\arg\min}_{\boldsymbol\alpha} \left\{\frac12 \boldsymbol \alpha^T\Theta_{11}^{-1} \boldsymbol\alpha + (\mathbf{s}_{12} +cw^*_{22} \Theta_{11}^{-1} \mathbf1)^T \boldsymbol\alpha + \lambda\|\boldsymbol\alpha\|_1\right\}.
	$$
	This solution can be written in a lasso regression form
	$$
		\boldsymbol\alpha =  {\arg\min}_{\boldsymbol\alpha} \left\{\frac12 \| A \boldsymbol\alpha - \mathbf{b} \|^2_2 + \lambda\|\boldsymbol\alpha\|_1\right\},
	$$		
	where $A$ and $\mathbf{b}$ satisfy
	$$
		A^TA = \Theta_{11}^{-1},
	$$
	(here $\Theta_{11}^{-1}$ is positive definite and symmetric, hence $A = A^T$ is symmetric as well,)
	and
	$$
		A\mathbf{b} = -(\mathbf{s}_{12} + cw^*_{22}  \Theta_{11}^{-1}\mathbf1).
	$$
	Let the eigendecomposition of $\Theta_{11}^{-1}$ be
	$$
		\Theta_{11}^{-1} = Q \Lambda Q^T,
	$$
	where the columns of $Q$ consists of the eigenvectors of $\Theta_{11}^{-1}$ and $\Lambda$ is the diagonal matrix with whose diagonal values correspond to the eigenvalues of $\Theta_{11}^{-1}$.  Then $A$ and $A^{-1}$ can be calculated from
	$$
		A =  Q \Lambda^{\frac12} Q^T,
	$$
	and
	$$
		A^{-1} =  Q \Lambda^{-\frac12} Q^T.
	$$
\item
	Update
	$$
		\boldsymbol\theta_{12} = \frac{1}{w^*_{22}}\boldsymbol\alpha + c\mathbf1.
	$$
\item
	Update
	$$
		\theta_{22} = \frac{1}{w^*_{22}} + \boldsymbol\theta_{12}^T \Theta_{11}^{-1} \boldsymbol\theta_{12}.
	$$
\item
	Update
	\begin{eqnarray*}
		W &=&
	\begin{pmatrix}
	\Theta_{11}^{-1} +\frac{\Theta_{11} ^{-1}\boldsymbol\theta_{12}\boldsymbol\theta_{12}^T\Theta_{11}^{-1} }{\theta_{22} - \boldsymbol\theta_{12}^T\Theta_{11} ^{-1} \boldsymbol\theta_{12}} 
		& -\frac{\Theta_{11} ^{-1}\boldsymbol\theta_{12}}{\theta_{22} - \boldsymbol\theta_{12}^T\Theta_{11} ^{-1} \boldsymbol\theta_{12}} \\
	\cdot 
		& \frac{1}{\theta_{22} - \boldsymbol\theta_{12}^T\Theta_{11} ^{-1} \boldsymbol\theta_{12}}
	\end{pmatrix}\\
	&=& 
	\begin{pmatrix}
	\Theta_{11}^{-1} +w^*_{22} \cdot \Theta_{11} ^{-1}\boldsymbol\theta_{12}\boldsymbol\theta_{12}^T\Theta_{11}^{-1} 
		& -w^*_{22} \cdot \Theta_{11} ^{-1}\boldsymbol\theta_{12}\\
	\cdot 
		&w^*_{22} 
	\end{pmatrix}.
	\end{eqnarray*}
\item
	Arrange the rows and columns of $W$ back to the original order.
\end{itemize}
\item
Calculate the final estimate $\Theta = W^{-1}$.
\end{itemize}


\subsection{Proofs of \eqref{eqapp:inverse1} and \eqref{eqapp:inverse2}} \label{app:algo:proofs}

We now prove the equations \eqref{eqapp:inverse1} and \eqref{eqapp:inverse2}.  From the block definition of $W$ and $\Theta$,
\begin{eqnarray}
	W_{11} \Theta_{11} + \mathbf{w}_{12}\boldsymbol\theta_{12}^T &=& I \label{eqapp:inveq1}\\
	W_{11} \boldsymbol\theta_{12} + \theta_{22} \cdot \mathbf{w}_{12} &=& \mathbf{0} \label{eqapp:inveq2}\\
	\mathbf{w}_{12}^T \Theta_{11} + w_{22} \cdot \boldsymbol\theta_{12}^T &=& \mathbf{0}^T \label{eqapp:inveq3}\\
	\mathbf{w}_{12}^T \boldsymbol\theta_{12} + w_{22}\cdot \theta_{22} &=& 1.\label{eqapp:inveq4}
\end{eqnarray}
From \eqref{eqapp:inveq2},
$$
	 \mathbf{w}_{12} = - \theta_{22}^{-1}W_{11} \boldsymbol\theta_{12}.
$$
Plugging into \eqref{eqapp:inveq4}, 
$$
	w_{22}  =  \theta_{22}^{-1} (1 - \mathbf{w}_{12}^T \boldsymbol\theta_{12}) = \theta_{22}^{-1} + \theta_{22}^{-2}\boldsymbol\theta_{12}^TW_{11}\boldsymbol\theta_{12}.
$$
From \eqref{eqapp:inveq1},
$$
	W_{11}^{-1} = \Theta_{11} + W_{11}^{-1} \mathbf{w}_{12}\boldsymbol\theta_{12}^T
	= \Theta_{11} - W_{11}^{-1} \theta_{22}^{-1}W_{11} \boldsymbol\theta_{12}\boldsymbol\theta_{12}^Td = \Theta_{11} - \theta_{22}^{-1}\boldsymbol\theta_{12}\boldsymbol\theta_{12}^T 
$$
This proves \eqref{eqapp:inverse1}.  

Now we prove \eqref{eqapp:inverse2}.  From \eqref{eqapp:inveq1} and \eqref{eqapp:inveq3},
\begin{eqnarray}
	W_{11} &=& \Theta_{11}^{-1} - \mathbf{w}_{12}\boldsymbol\theta_{12}^T\Theta_{11}^{-1} \label{eqapp:eq1} \\
	 \mathbf{w}_{12} &=& -w_{22}\Theta_{11} ^{-1}\boldsymbol\theta_{12} \label{eqapp:eq2} 
\end{eqnarray}
Plugging \eqref{eqapp:eq2} into \eqref{eqapp:inveq4},
$$
	 w_{22}\theta_{22}  -w_{22}\boldsymbol\theta_{12}^T\Theta_{11} ^{-1} \boldsymbol\theta_{12} = 1 
$$
and consequently,
\begin{equation} \label{eqapp:w22}
	w_{22} = \frac{1}{\theta_{22} - \boldsymbol\theta_{12}^T\Theta_{11} ^{-1} \boldsymbol\theta_{12}}.
\end{equation}
Plugging \eqref{eqapp:w22} into \eqref{eqapp:eq2}, 
$$
	\mathbf{w}_{12} = -\frac{\Theta_{11} ^{-1}\boldsymbol\theta_{12}}{\theta_{22} - \boldsymbol\theta_{12}^T\Theta_{11} ^{-1} \boldsymbol\theta_{12}}.
$$
Plugging \eqref{eqapp:w22} into \eqref{eqapp:eq1},
$$
	W_{11} = \Theta_{11}^{-1} +\frac{\Theta_{11} ^{-1}\boldsymbol\theta_{12}\boldsymbol\theta_{12}^T\Theta_{11}^{-1} }{\theta_{22} - \boldsymbol\theta_{12}^T\Theta_{11} ^{-1} \boldsymbol\theta_{12}}.
$$
This proves \eqref{eqapp:inverse2}.



\end{document}